\renewcommand{\P}{\mathbb{P}}
\newtheorem{definition}{Definition}
\newtheorem{remark}{Remark}
\DeclareDocumentCommand{\repairSpc}{m O{} O{}}
{
		\cS_{#1}^{#2}
		\IfNoValueTF{#3}%
		{}%
		{\@ifmtarg{#3}{}{\left[#3\right]}}%
}
\DeclareDocumentCommand{\cmpSpc}{m O{} O{}}
{
	\cT_{#1}^{#2}
	\IfNoValueTF{#3}%
	{}%
	{\@ifmtarg{#3}{}{\left[#3\right]}}%
}
\DeclareDocumentCommand{\cmpVec}{m O{} O{}}
{
	t_{#1}^{#2}
	\IfNoValueTF{#3}%
	{}%
	{\@ifmtarg{#3}{}{\left[#3\right]}}%
}
\DeclareDocumentCommand{\nodeCmp}{m O{} O{}}
{
	\cU_{#1}^{#2}
	\IfNoValueTF{#3}%
	{}%
	{\@ifmtarg{#3}{}{\left[#3\right]}}%
}
\DeclareDocumentCommand{\ncmpVec}{m O{} O{}}
{
	u_{#1}^{#2}
	\IfNoValueTF{#3}%
	{}%
	{\@ifmtarg{#3}{}{\left[#3\right]}}%
}
\renewcommand{\vec}[1]{\mbox{\boldmath$#1$}}
\renewcommand{\dim}[1]{\mathsf{dim}\left(#1\right)}
\newcommand{\set}[1]{\left\lbrace #1 \right\rbrace}
\newcommand{\size}[1]{ \left| #1 \right|}
\newcommand{\intv}[1]{I_{#1}}
\newcommand{\node}[1]{\cW_{#1}}
\newcommand{\nSym}[1]{w_\star{\scriptstyle ({#1}) } }
\newcommand{\sVec}[2]{s_{#2}{\scriptstyle(#1)}}
\let\textquotedbl="
\newcommand{\sspan}[1]{\left\langle #1 \right\rangle}
\newcommand{\cA}{\mathcal{A}}
\newcommand{\cB}{\mathcal{B}}
\newcommand{\cC}{\mathcal{C}}
\newcommand{\cF}{\mathcal{F}}
\newcommand{\cS}{\mathcal{S}}
\newcommand{\cT}{\mathcal{T}}
\newcommand{\cU}{\mathcal{U}}
\newcommand{\cW}{\mathcal{W}}
\newcommand{\dm}[1]{\mathsf{dim}\left( #1 \right)}
\newcommand{\sB}{\mathscr{B}}
\newcommand{\bbF}{\mathbb{F}}
\newtheorem{prop}{Proposition}
\newtheorem{cor}{Corollary}
\newtheorem{thm}{Theorem}
\title{\vspace{0.3in} A Probabilistic Approach Towards Exact-Repair Regeneration Codes}
\author{
\IEEEauthorblockN{Mehran Elyasi}
\IEEEauthorblockA{Department of ECE \\ University of Minnesota \\ melyasi@umn.edu}
\and
\IEEEauthorblockN{Soheil Mohajer}
\IEEEauthorblockA{Department of ECE \\ University of Minnesota\\ soheil@umn.edu}
}
\begin{document}

\maketitle
\begin{abstract}
Regeneration codes  with exact-repair property for distributed storage systems is studied in this paper. For exact-repair problem, the achievable points of $(\alpha,\beta)$ tradeoff match with the outer bound only for minimum storage regenerating (MSR), minimum bandwidth regenerating (MBR), and some specific values of $n$, $k$, and $d$. Such tradeoff is characterized in this work for general $(n,k,k)$, (i.e., $k=d$) for some range of per-node storage ($\alpha$) and repair-bandwidth ($\beta$). Rather than explicit  code construction, achievability of these tradeoff points is shown by proving \emph{existence} of exact-repair regeneration codes for any $(n,k,k)$.   
More precisely, it is shown that an $(n,k,k)$ system can be extended by adding a new node, which is randomly picked from some ensemble, and it is proved that, with high probability, the existing nodes together with the newly added one maintain properties of exact-repair regeneration codes. The new achievable region improves upon the existing code constructions. In particular, this result provides a complete tradeoff characterization for an $(n,3,3)$ distributed storage system for any value of $n$.
\end{abstract}

\section{Introduction}
Distributed storage systems (DSS) are widely being used to provide reliability to storage technologies. Regeneration codes play a central role to manage data collection as well as system maintenance in DSS . An $(n,k,d)$ regeneration-code encodes a file comprised of $F$ symbols from a finite field $\bbF_q$ into $n$ segments (nodes) $W_1, W_2, \dots, W_n$, each including $\alpha$ symbols. Each data collector is able to recover the entire file by accessing any subset of nodes of size at least $k$. Moreover, whenever a node fails, it can be repaired by accessing $d$ remaining nodes and downloading $\beta$ symbols from each. 
The repair process can be performed in the functional or exact sense. In functional repair, a failed node will be replaced by another one, so that the resulting nodes  maintain the data-recovery as well as node-repair properties. In exact repair (ER), however, the content of a failed node will be exactly replicated by the helpers.  

It turns out that there is a fundamental tradeoff between minimum required values of 
$\alpha$ and $\beta$  to store a given amount of data.  Such tradeoff is derived for function-repair regeneration codes by Dimakis et al. \cite{dimakis2010network}, which is given by 
\[
F\leq \sum\nolimits_{i=0}^{k-1}\min(\alpha, (d-i)\beta).
\]
In practical applications, however, exact-repair is an appealing property, specially when it is desirable that the stored contents remain intact over time.
In contrast to functional-repair, characterizing the optimal tradeoff between per-node storage and repair-bandwidth is a widely open problem for exact-repair regeneration codes.

This tradeoff is only characterized for very special cases. In particular, the optimum tradeoff of a $(4,3,3)$ system is characterized in \cite{ tian2013rate} using a computer-aided approach,  where it was shown that functional and exact repair tradeoffs are not identical. Moreover, the tradeoff is partially characterized in \cite{mohajer2015exact} for a $(5,4,4)$ system, which is extended to a complete characterization in \cite{tian2015note}. For general $(n,k,d)$-DSS, families of outer bounds are developed,  (e.g. \cite{sasidharan2014improved, mohajer2015new}), which can only partially characterize the tradeoff. Recently, the ER  tradeoff is characterized for $(n=k+1, k, d=k)$ systems, independently in \cite{elyasi2015linear, prakash2015storage, duursma2015shortened}, under the assumption of employing linear codes.

On the other hand, efficient ER regeneration codes are introduced for the minimum bandwidth regeneration (MBR) and minimum storage regeneration (MSR) points \cite{rashmi2011optimal, suh2010existence, cadambe2010distributed}. In spite of many other interesting proposed  code construction (e.g. \cite{tian2014layered, goparaju2014new, tamo2013zigzag}), our knowledge about scalability of the system is very limited, for the interior of the tradeoff. 

In this work, we study the scalability problem, and show that for any $n\geq k+1$, the optimum tradeoff of an $(n,k,k)$ ER system matches for some range of $(\alpha,\beta)$, with that of a $(k+1,k,k)$ system. Unlike the standard approach in the literature where achievability of the tradeoff is shown by providing explicit code construction, we use a novel approach, based on random coding. Similar to Shannon's random coding argument, we show that an existing code can be extended by appending new randomly generated nodes, which with high probability maintain the exact-repair property. To the best of our knowledge, this is the first result which suggests that the ER tradeoff may only depend on $(k,d,\alpha,\beta)$ and not the number of nodes $n$ for the interior of the region (similar to the functional repair case). In particular, our approach yields in a complete ER tradeoff characterization for an $(n,3,3)$ DSS. 

The rest of this paper is organized as follows. We first formally present the results in Section~\ref{sec:result}. In Section~\ref{sec:code-struct} we characterize a set of conditions a newly added node should satisfy in order to maintain ER property, and in Section~\ref{sec:prob} we show that w.h.p. a randomly chosen node satisfies these properties. Built on the tools developed in Sections~\ref{sec:code-struct} and ~\ref{sec:prob}, the proof of the main results are presented in Section~\ref{sec:proof}. 

\section{The Model and Main Result}
\label{sec:result}
An exact-repair regenerating distributed storage system with parameters $(n, k, d)$ and $(\alpha, \beta)$ consists of $n$ storage nodes, each with storage capacity $\alpha$ symbols.  A file including $F$ symbols from some finite field $\bbF_q$ is encoded into $n$ pieces of information, in distributed manner, and each piece is stored on one of the storage nodes. 
We associate each node with an index in $\intv n\triangleq\{1,2,\dots,n\}$, and denote by $W_i$ the content of node $i$, for $i\in \intv n$. \emph{Data recovery} property implies that the original file can be recovered from any subset of nodes $\cA\subseteq \intv n$, provided that $|\cA|\geq k$. Moreover, if any node $x\in \intv n$ fails, its content $\cW_x$ can be duplicated by receiving (at most) $\beta$ symbols from each node in a set $\cA$ (called \textit{helper nodes}), for every $\cA \subseteq  \intv n \setminus \set x $ with $|\cA|\geq d$. For a given file size $F$,  characterizing the optimum tradeoff between $\alpha$ and $\beta$ satisfying the aforementioned constraints is a challenging open problem for general $(n,k,d)$-DSS.

Recently, the optimum tradeoff between the per-node storage and repair-bandwidth of ER regeneration codes is characterized in \cite{elyasi2015linear,prakash2015storage, duursma2015shortened} for an $(n=k+1,k,d=k)$-DSS, under the limitation of employing linear codes.  It is shown that the optimum tradeoff for a $(k+1,k,k)$-ER linear DSS is a piecewise-linear function,  with $k$ corner point. The $m$-th corner points of this region is given by
	 \begin{align}
	 \bar{ \alpha} = \frac{m+1}{m(k+1)} \qquad \textrm{and} \qquad
	 \bar{\beta} =  \frac{m+1}{k(k+1)},
	 \label{eq:corner}
	 \end{align}
for $m=1,2,\dots,k$. 

In particular, it was shown that a class of codes  proposed in \cite{tian2014layered} and \cite{goparaju2014new} achieve this optimum tradeoff for any $(k+1,k,k)$. Moreover, the linearity constraint is relaxed for $k=3$  and $k=4$ in \cite{tian2013rate} and \cite{tian2015note}, respectively, where the optimality of the tradeoff is proved using general information-theoretic arguments. This tradeoff is shown  for $k=4$ in Figure~\ref{fig:region}. 

\begin{figure}[t]
\centering
\includegraphics[width=0.35\textwidth]{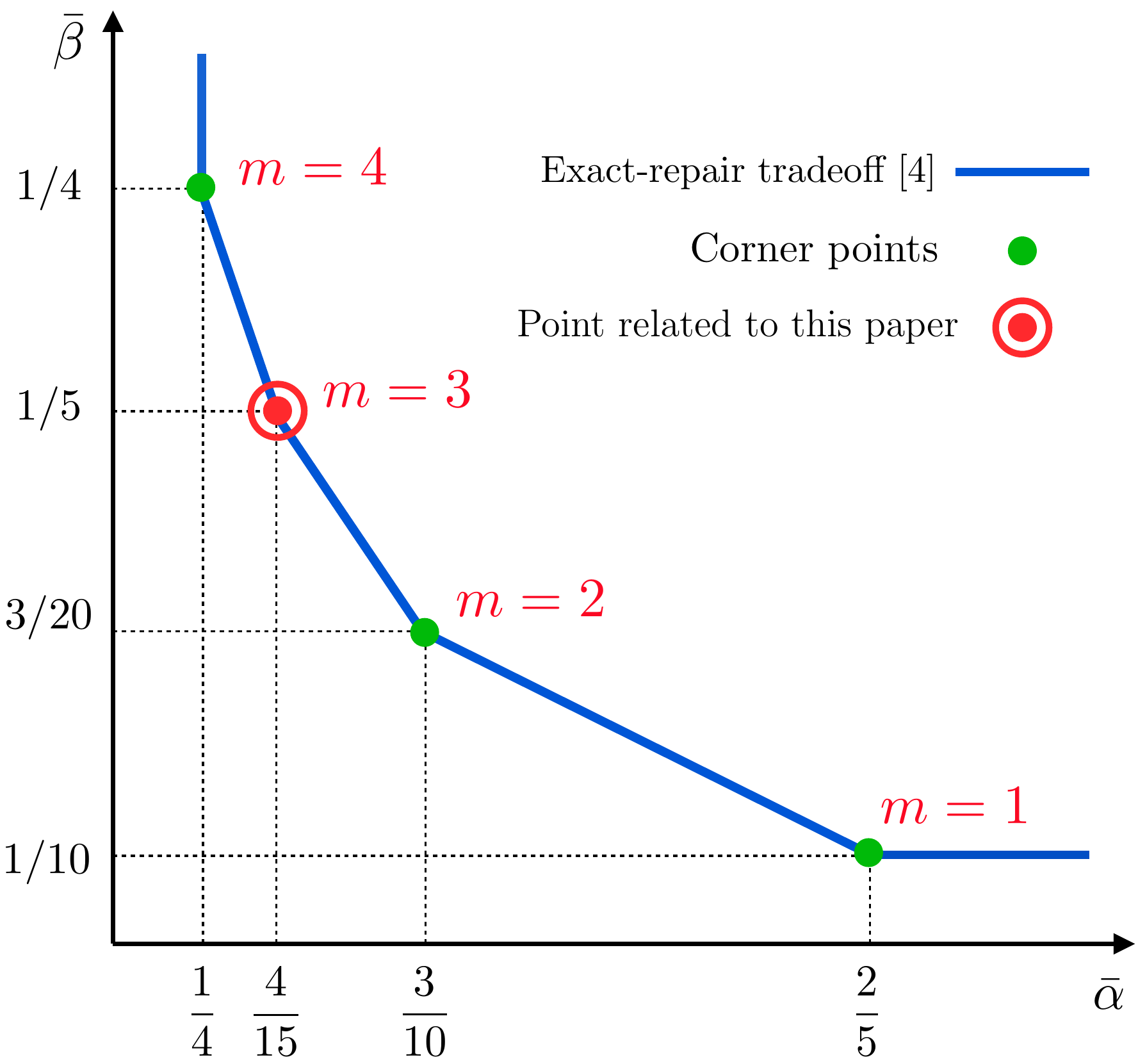}
\caption{{The exact-repair tradeoff for $(5,4,4)$-DSS.}} 
\label{fig:region}
\end{figure}

The two extreme points in \eqref{eq:corner}, namely $m=1$ and $m=k$,  correspond to the MBR and MSR points, respectively. Code constructions for MBR point are provided in \cite{rashmi2011optimal} for general $(n,k,d=k)$. Moreover, it is shown that these parameters 
are \emph{asymptotically} achievable for the MSR point \cite{suh2010existence, cadambe2010distributed}. However, it is not clear whether the interior of the tradeoff in \eqref{eq:corner} can be achieved when the  number of storage nodes is  larger than $k+1$, that is $n>k+1$. 

In this work we provide a partial answer to this question in a positive way: \textit{the corner points in \eqref{eq:corner} on the optimum tradeoff associated with $m=k-1$ is achievable}. We use a random coding strategy and probabilistic argument, to show that for any set of $n\geq k+1$ storage nodes forming $(n,k,k)$ exact-repair regeneration code, one can always \emph{find} a new node, such that the new node together with the existing $n$ nodes form an $(n+1,k,k)$ exact-repair regeneration code. Instead of constructing a new node, we rather pick it from an ensemble of nodes, and show that a random node preserves the desired properties with high probability, provided that the underlying field size is large enough. 

The main result of this work is formally stated in the following theorem. 

\begin{thm}
For any $(n,k,d=k)$ distribued storage system with $n>k$, there exist some large enough $q$, and exact repair regeneration codes over $\bbF_q$ with normalized per-node capacity and repair-bandwidth 
\[
(\bar \alpha,\bar \beta)=\left(\frac{m+1}{m(k+1)},\frac{m+1}{k(k+1)}\right)=\left(\frac{k}{k^2-1},\frac{k-1}{k^2-1}\right).
\] 
\label{thm:main}
\end{thm}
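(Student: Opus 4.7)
The plan is to proceed by induction on $n$. For the base case $n = k+1$, the target corner point $(\bar\alpha, \bar\beta) = (\tfrac{k}{k^2-1},\tfrac{k-1}{k^2-1})$ is precisely the $m = k-1$ corner of the known $(k+1,k,k)$ tradeoff, which is achieved by the explicit linear constructions of \cite{tian2014layered, goparaju2014new}. For the inductive step, I would assume that an $(n, k, k)$ exact-repair code $\mathcal{C}_n$ operating at $(\bar\alpha, \bar\beta)$ already exists and build an $(n+1, k, k)$ code by appending a single new node $W_{n+1}$ together with the appropriate family of repair maps, so that the full collection $\{W_1, \dots, W_{n+1}\}$ satisfies all data-recovery and exact-repair constraints at the same operating point. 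Rather than constructing $W_{n+1}$ explicitly, I would draw it at random from an appropriately chosen linear ensemble, in the spirit of Shannon-style random coding, and show that the required properties hold with positive probability; the probabilistic method then yields existence.

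Using the structural characterization promised in Section~\ref{sec:code-struct}, I would reduce the inductive step to a finite list of rank / non-singularity conditions that the new node and its associated helper transmission maps must satisfy: (i) data recovery from every $k$-subset that contains $W_{n+1}$; (ii) the existence of a linear repair map reconstructing $W_{n+1}$ from every $k$-subset of existing nodes; and (iii) the existence of a linear repair map for each existing $W_i$ from every $k$-subset that contains $W_{n+1}$. I would parameterize the problem by letting $W_{n+1}$ be determined by helper transmission matrices $\mathbf{V}_1,\dots,\mathbf{V}_k \in \bbF_q^{\beta\times\alpha}$ (applied to a fixed canonical helper set of size $k$), chosen uniformly and independently at random. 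Under this parameterization, $W_{n+1}$ is a polynomial function of the $\mathbf{V}_j$'s, and each condition in (i)--(iii) translates into the non-vanishing of a certain determinantal polynomial of bounded degree in those entries. A Schwartz--Zippel bound together with a union bound over the polynomially many conditions (in $n$, $k$) then yields overall failure probability $O(1/q)$, which vanishes for $q$ large enough.

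The main obstacle, which I expect to occupy Section~\ref{sec:prob}, is to establish that \emph{none} of these determinantal polynomials is identically zero over $\bbF_q$. Inflating $q$ cannot defeat an identically vanishing polynomial; what is required is, for each condition, a specific \emph{witness} assignment of $\mathbf{V}_1,\dots,\mathbf{V}_k$ that certifies non-vanishing. The natural source of such witnesses is the base code itself: for any $k+1$ nodes drawn from $\mathcal{C}_n \cup \{W_{n+1}\}$, I would identify a local structure isomorphic to a scaled copy of the seed $(k+1,k,k)$ code at the $m = k-1$ corner, so that each rank condition reduces to the already-established non-singularity of the seed. Orchestrating this consistently across all $\binom{n+1}{k}$ data-collection subsets and all $\binom{n}{k}$ repair subsets is delicate combinatorial bookkeeping, but it reduces the theorem to the corresponding base-case property of the $(k+1,k,k)$ code, which is available from the cited constructions; this last reduction is the step I would expect to be technically the most involved.
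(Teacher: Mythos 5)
Your outline correctly identifies the paper's high-level strategy --- induct on $n$, seed with the explicit $(k+1,k,k)$ codes of \cite{tian2014layered,goparaju2014new}, append one random node per step, and take $q$ large --- but there are two genuine gaps in the execution, and together they contain essentially all of the technical content of Sections~\ref{sec:code-struct} and~\ref{sec:prob}. The first is your claim that conditions (ii) and (iii) translate into determinantal non-vanishing in the entries of $\mathbf{V}_1,\dots,\mathbf{V}_k$. Those conditions are \emph{existential} (``there exist $\beta$-dimensional subspaces $\cS_j\subseteq\cW_j$ whose sum contains the failed node''), and an existential statement of this form does not become the non-vanishing of a polynomial in your randomized parameters until you have \emph{exhibited} the repair subspaces as explicit functions of the random choice. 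The paper does exactly this via the \emph{well-aligned} Ansatz of Definition~\ref{def:well}: Proposition~\ref{prop:well} writes down closed-form repair spaces $\repairSpc{j}[\star][\cA]$, $\repairSpc{\star}[\ell][\cB]$ and $\repairSpc{j}[\ell][\cB]$ and verifies them directly, and this in turn rests on the structural decomposition $\cW_j=\repairSpc{j}[x][\cA]\oplus\cmpSpc{j}[x][\cA]$ with $\sum_{j}\cmpVec{j}[x][\cA]=\vec{0}$ established in Proposition~\ref{prop:main}. Without an analogue of the well-aligned structure, your Schwartz--Zippel reduction is not yet set up.

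The second gap is your witness argument. The $k$ existing nodes $\{\cW_i:i\in\cA'\}$ are frozen by the inductive hypothesis and range over \emph{all} $k$-subsets of the current $n$ nodes, so there is no reason they sit inside ``a scaled copy of the seed code''; the seed constrains only the original $k{+}1$ nodes. The paper sidesteps this entirely: it samples $\cW_\star$ uniformly from \emph{all} $\alpha$-dimensional subspaces of $\cF$ (a larger and more symmetric ensemble than the one your $\mathbf{V}_j$-parameterization generates), and in place of Schwartz--Zippel it directly \emph{counts} well-aligned subspaces against the Gaussian-binomial total, obtaining $\P[E(\cA_0;x_0)]=1-o(1)$ in \eqref{pr-6}; the union bound over $\binom{n}{k}$ subsets then closes the argument. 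The count simultaneously supplies the witness (some well-aligned node exists for each fixed $(\cA_0,x_0)$) and the quantitative $1-o(1)$ estimate, so no separate non-degeneracy argument is needed. Your skeleton is the right one, but the well-aligned definition, the constructive repair of Proposition~\ref{prop:well}, and the subspace count of Section~\ref{sec:prob} are the missing substance.
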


The performance of the code proposed in this paper for $k=4$ is compared to that of existing ones in the literature in Figure~\ref{fig:comp}. As it is shown, for both codes proposed in \cite{tian2014layered} and \cite{goparaju2014new}   the maximum file size can be stored in an ER regeneration system for given $(\alpha,\beta)=(k,k-1)$ decreases with $n$, whereas number of nodes is irrelevant to the file size in the new proposed code. 
 \begin{figure}[t]
	 	\hspace{-3mm}\centering
	 	\includegraphics[width=0.5\textwidth]{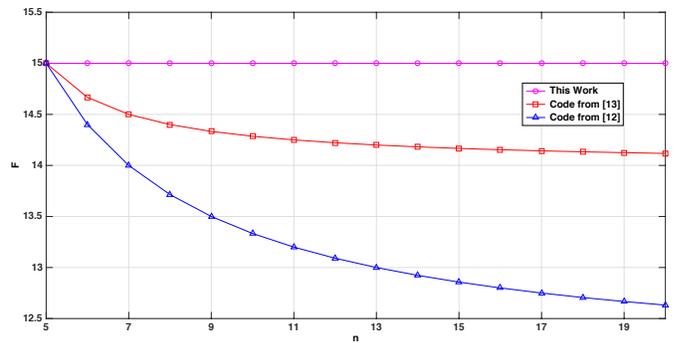}
	 	\caption{ER capacity of $(n,4,4)$-DSS as a function $n$ for $(\alpha,\beta)=(4,3)$.} 
	 	\label{fig:comp}
\end{figure}

Note that existence of such ER regeneration codes provides a partial characterization for the optimum tradeoff of an $(n,k,k)$-DSS as follows. 

\begin{cor}
The optimum tradeoff between the per-node capacity and repair-bandwidth of any  distributed storage system with parameters $(n,k,k)$ is given by 
\[
\bar \beta + (k-1)\bar \alpha \geq 1,
\]
for $(k-1)/k \leq \bar\alpha/\bar\beta\leq 1$. 
\label{cor:partial}
\end{cor}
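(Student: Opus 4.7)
\textbf{Proof plan for Corollary~\ref{cor:partial}.}

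The plan is to separate achievability from converse, leveraging Theorem~\ref{thm:main} together with the pre-existing characterization of the $(k+1,k,k)$ tradeoff cited in the introduction.

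First, for achievability, I would note that the two endpoints of the stated segment both lie on the line $\bar\beta+(k-1)\bar\alpha=1$: the corner $(\bar\alpha,\bar\beta)=(k/(k^2-1),(k-1)/(k^2-1))$, i.e.\ the $m=k-1$ corner of \eqref{eq:corner}, is supplied directly by Theorem~\ref{thm:main} for every $n>k$, while the MSR corner $(\bar\alpha,\bar\beta)=(1/k,1/k)$, i.e.\ the $m=k$ corner, is (asymptotically) achievable for every $(n,k,d{=}k)$ by the constructions of \cite{suh2010existence,cadambe2010distributed}. Interior points of the segment are then obtained by routine space-sharing: split the file into two parts in a chosen proportion, encode each part independently with its designated corner code, and have every node store the concatenation of the two contents. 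The resulting per-node storage and per-helper bandwidth, normalized by the total file size, come out as convex combinations of the endpoint values, sweeping the entire segment.

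Second, for the converse, the strategy is to reduce $(n,k,k)$ to $(k+1,k,k)$. Given any $(n,k,k)$-ER code, restrict attention to its first $k+1$ nodes. Every $k$-subset of this restricted collection was already a data-recovery set in the original code, and for a single failure among these $k+1$ nodes the remaining $k$ constitute a valid helper-set in the original code; hence both data-recovery and exact-repair survive the restriction, which is therefore a genuine $(k+1,k,k)$-ER code with the same $(\alpha,\beta)$. I would then quote the optimum tradeoff of the $(k+1,k,k)$-DSS established in \cite{elyasi2015linear,prakash2015storage,duursma2015shortened} (linear codes, general $k$) and in \cite{tian2013rate,tian2015note} (unrestricted, $k\in\{3,4\}$), whose portion between the $m=k-1$ and $m=k$ corners is precisely $\bar\beta+(k-1)\bar\alpha\geq 1$.

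The main subtlety I anticipate lies in the scope of the converse for general $k$: outside $k\in\{3,4\}$ the $(k+1,k,k)$ outer bound is known only under the linearity restriction, so Corollary~\ref{cor:partial} attains unconditional optimality only in those small-$k$ instances---consistent with the abstract's highlighting of the complete characterization for $(n,3,3)$-DSS. Achievability, by contrast, is unconditional modulo the standard (asymptotic) MSR constructions, and the space-sharing step contains no real difficulty beyond bookkeeping of the normalized parameters.
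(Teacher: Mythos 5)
Your achievability argument matches the paper's exactly: same two corner points (MSR and the $m=k-1$ corner from Theorem~\ref{thm:main}), same space-sharing between them. Your converse, however, takes a more roundabout route than the paper's. You reduce to $(k+1,k,k)$ by restriction and then invoke the full exact-repair tradeoff characterization of \cite{elyasi2015linear,prakash2015storage,duursma2015shortened,tian2013rate,tian2015note}, which forces you to flag a caveat about linearity outside $k\in\{3,4\}$. The paper instead proves the bound $\bar\beta + (k-1)\bar\alpha \geq 1$ directly by a one-line cut-set computation,
\[
F = H(W_1,\dots,W_k) \leq H(W_1,\dots,W_{k-1}) + H(W_k \mid W_1,\dots,W_{k-1}) \leq (k-1)\alpha + \beta,
\]
which is nothing more than the relevant facet of the functional-repair cut-set bound and holds unconditionally for all $k$ and all (not necessarily linear) codes. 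Your caveat is therefore an artifact of the detour: the segment between the $m=k-1$ and $m=k$ corners of the $(k+1,k,k)$ tradeoff coincides with the cut-set bound, so its optimality needs none of the heavy exact-repair machinery you cite. Had you noticed this, your reduction-to-$(k+1,k,k)$ step would also become unnecessary. Your restriction argument is nonetheless sound and would suffice as a fallback; the paper's approach simply buys unconditional generality at lower cost. (Incidentally, the range $(k-1)/k \leq \bar\alpha/\bar\beta \leq 1$ in the statement appears to be a typo for $(k-1)/k \leq \bar\beta/\bar\alpha \leq 1$, since the $m=k-1$ corner has $\bar\alpha/\bar\beta = k/(k-1) > 1$; neither your proof nor the paper's is affected.)
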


As another consequence, Theorem~\ref{thm:main} suffices to characterize the entire optimum tradeoff for an $(n,3,3)$-DSS. 

\begin{cor}
The optimum tradeoff between the per-node capacity and repair-bandwidth of any exact-repair $(n,k=3,d=3)$ distributed storage system is given by
\begin{align*}
3\bar \alpha \geq 1, \qquad 
2\bar \alpha + \bar \beta \geq 1, \qquad 
4\bar \alpha + 6\bar \beta \geq 3, \qquad 
6\bar \beta \geq 1.
\end{align*}
\vspace{-5mm}
\label{cor:n33}
\end{cor}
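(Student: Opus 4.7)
The plan is to combine Theorem~\ref{thm:main} with known achievability results at the endpoints of the tradeoff and a projection argument for the converse. For $k=3$, the three corner points predicted by \eqref{eq:corner} are $(\bar\alpha,\bar\beta)=(1/2,1/6)$ at $m=1$ (MBR), $(\bar\alpha,\bar\beta)=(3/8,1/4)$ at $m=2$, and $(\bar\alpha,\bar\beta)=(1/3,1/3)$ at $m=3$ (MSR). These three vertices are exactly the pairwise intersections of the four half-spaces listed in the corollary, so once they are shown to be achievable for every $n\geq 4$, space-sharing (concatenating codes in appropriate proportions) between adjacent corner codes covers the entire piecewise-linear boundary, and any operating point that dominates the boundary is reached trivially by padding.

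For achievability of the corners, the MBR point is achievable for every $(n,3,3)$ by the product-matrix construction of Rashmi et al.~\cite{rashmi2011optimal}; the MSR point $(1/3,1/3)$ is achievable for every $(n,3,3)$ via the (file-size-asymptotic) interference-alignment constructions of \cite{suh2010existence, cadambe2010distributed}; and the middle corner point $(3/8,1/4)$ coincides with $(k/(k^2-1),(k-1)/(k^2-1))$ at $k=3$, hence is delivered by Theorem~\ref{thm:main} for every $n\geq 4$. This is precisely the role of Theorem~\ref{thm:main} in the argument: it supplies the one corner that was previously unknown for general $n$.

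For the converse, the bounds $3\bar\alpha\geq 1$, $6\bar\beta\geq 1$, and the MSR-side facet $2\bar\alpha+\bar\beta\geq 1$ already follow from the functional-repair cut-set bound of Dimakis et al.~\cite{dimakis2010network} evaluated at $k=d=3$. The genuinely exact-repair bound $4\bar\alpha+6\bar\beta\geq 3$ is inherited from Tian's characterization of the $(4,3,3)$ system~\cite{tian2013rate}: any $(n,3,3)$ exact-repair code with $n\geq 4$ induces a $(4,3,3)$ exact-repair code with identical parameters $(\alpha,\beta,F)$ by restricting attention to any four of its nodes, so every $(4,3,3)$ outer bound transfers verbatim. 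The substantive difficulty is packaged inside Theorem~\ref{thm:main}; given it, Corollary~\ref{cor:n33} reduces to bookkeeping, with the only subtlety being the verification that Tian's $(4,3,3)$ converse holds without a linearity assumption (which it does), so the transferred bound applies to arbitrary exact-repair schemes and matches the generality in which the corollary is stated.
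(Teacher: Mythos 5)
Your proof is correct and takes essentially the same route as the paper: the converse is obtained by restricting any $(n,3,3)$ code to four nodes and invoking Tian's information-theoretic $(4,3,3)$ outer bound, and achievability follows from the three corner points (MBR and MSR from \cite{rashmi2011optimal,suh2010existence,cadambe2010distributed}, the middle corner $(3/8,1/4)$ from Theorem~\ref{thm:main}) together with space-sharing. Your extra observations---that three of the four facets are already implied by the cut-set bound, and that Tian's $(4,3,3)$ converse is not restricted to linear codes---are accurate and simply make explicit what the paper uses implicitly.
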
 
The proof of these results  can be found in Section~\ref{sec:proof}.

\section{Code Structure and Well-Aligned Node}
\label{sec:code-struct}

We start by redefining exact-repair regeneration codes in a linear framework. 

\subsection{Linear Codes}
It is easier to analyze linear regeneration codes in the context of vector spaces over a finite field.  In this context, we  denote subspaces by script letters, e.g. $\cW, \cS$ , and with slightly abuse of notation, use $\subseteq$ to denote subspace relationship. Moreover, we use $\sspan{\cdot}$ to denote the span of a set of vectors. Furthermore, we adopt the notation used in \cite{elyasi2015linear} as follows: 

A linear exact-repair regeneration code with parameters $(\alpha,\beta,F)$ for an $(n,k,d)$ distributed storage system can be defined as
\begin{itemize}
\item An $F$-dimensional vector space in $\bbF_q$ for some $q$, which we denote by $\cF$. The $F$ data symbols stored in the DSS form a basis for this vector space. 	

\item There are a total of $\alpha$ vectors from $\cF$ stored in node $i$, for $i\in \intv n$. These vectors span a subspace $\cW_i\subseteq \cF$, with $\dm{\cW_i}\leq \alpha$. 

\item \textit{Data recovery}: By accessing any subset of $k$ nodes, the entire vector space $\cF$ can be spanned. In other words, 
\[
\cF=\sum_{ j\in \cA}{\node j}\qquad  \forall \cA \subseteq \intv{n}, \size \cA =k.
\]
\item \textit{Node repair}: In the case of failure of node $x$, its content can be spanned by  summation of $d$ \emph{helping} vector spaces, each of dimensional $\beta$ and coming from one other one. More precisely, for every subset of nodes $\cA\subseteq \intv{n}\setminus\{x\}$ with $|\cA|=d$, we have\footnote{Note that the repair sent by node $j$ to $x$ potentially depends on the other helpers. This is captured by $[\cA]$ in our notation $\repairSpc{j}[x][\cA]$. However, we may drop $[\cA]$, whenever either $\cA$ is unique, or dependency of $\cA$ is clear from the context.}
\[
\node x \subseteq \sum_{j\in \cA} {\repairSpc{j}[x][\cA]}
\]
where ${\repairSpc{j}[x][\cA]}\subseteq \cW_j$ and $\dm{{\repairSpc{j}[x][\cA]}}\leq \beta$.
\end{itemize}

The following proposition lists some structural properties of optimum linear ER regeneration codes.
\begin{prop}
\label{prop:main}
Let $\cC$ be any $(n,k,k)$ ER regeneration code operating at $(\alpha, \beta, F)=(k,k-1,k^2-1)$. Consider the repair process of  node $x\in \intv n$ via nodes in $\cA$, where $\cA\subseteq \intv n \setminus \{x\}$, and $|\cA|=k$. Then vector space of each node $j \in \cA$ can be partitioned into two subspaces $\repairSpc{j}[x][\cA]$ and $\cmpSpc{j}[x][\cA]$ such that
\begin{enumerate}[(i)]
	 \item   $\repairSpc{j}[x][\cA]$ is the subspace node $j$ sends to repair node $x$, and $\dm{\repairSpc{j}[x][\cA]}=\beta=(k-1)$;\label{prop:main-1}\\[-3mm]
	\item   $\cmpSpc{j}[x][\cA]=\sspan {\cmpVec{j}[x][\cA]}$ is a one-dimensional subspace spanned by $\cmpVec{j}[x][\cA]\in \cW_j$;\label{prop:main-2}\\[-3mm]
	\item $\node j = \repairSpc{j}[x][\cA] \oplus \cmpSpc{j}[x][\cA]$;\label{prop:main-3}\\[-3mm]
	\item   $\sum_{j\in \cA} \cmpVec{j}[x][\cA]=0$;\label{prop:main-4}\\[-3mm]
	\item  $\cmpSpc{}[x][\cA]\triangleq \sum_{j\in \cA}\cmpSpc{j}[x][\cA]$ is a $(k-1)$-dimensional subspace of $\cF$;\label{prop:main-5}\\[-3mm]
	\item  $\cmpSpc{}[x][\cA]$ is spanned by $\{\cmpVec{j}[x][\cA]: j\in \cB\}$, for every $\cB\subseteq \cA$ with $|\cB|=k-1$.\label{prop:main-6}
\end{enumerate}
\end{prop}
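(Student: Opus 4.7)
The plan is to establish the six properties via two layered dimension-counting arguments. Set $\cR := \sum_{j \in \cA} \repairSpc{j}[x][\cA]$. Fix any preliminary complements $\cV_j \subseteq \cW_j$ so that $\cW_j = \repairSpc{j}[x][\cA] + \cV_j$, each generated by some $v_j \in \cW_j \setminus \repairSpc{j}[x][\cA]$, and for every $y \in \cA$ write $\cV_{-y} := \sum_{j \in \cA \setminus \{y\}} \cV_j$.

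The first step (a ``tightness chain'') simultaneously proves (\ref{prop:main-1})--(\ref{prop:main-3}). Applying data recovery to the $k$-node set $\{x\} \cup (\cA \setminus \{y\})$ gives $\cF = \cW_x + \sum_{j \neq y} \cW_j$; combining this with the repair constraint $\cW_x \subseteq \cR$ and the decomposition $\cW_j \subseteq \cR + \cV_j$ shows $\cR + \cV_{-y} = \cF$, hence $\dm{\cR} + \dm{\cV_{-y}} \geq k^2-1$. On the other hand, subadditivity of dimension yields
\begin{align*}
\dm{\cR} + \dm{\cV_{-y}}
&\leq \sum_{j \in \cA} \dm{\repairSpc{j}[x][\cA]} + \sum_{j \neq y} \left(\dm{\cW_j} - \dm{\repairSpc{j}[x][\cA]}\right) \\
&= \dm{\repairSpc{y}[x][\cA]} + \sum_{j \neq y} \dm{\cW_j} \\
&\leq (k-1) + (k-1)k = k^2-1.
\end{align*}
All three inequalities must therefore saturate simultaneously, and letting $y$ vary over $\cA$ forces $\dm{\cW_j} = k$ and $\dm{\repairSpc{j}[x][\cA]} = k-1$ for every $j \in \cA$, the direct-sum identities $\cR = \bigoplus_{j \in \cA} \repairSpc{j}[x][\cA]$ and $\cV_{-y} = \bigoplus_{j \neq y} \cV_j$ (with $\dm{\cV_j}=1$), and the crucial relation $\cR \cap \cV_{-y} = \{0\}$ for every $y$. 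These immediately deliver (\ref{prop:main-1})--(\ref{prop:main-3}) with the preliminary complements.

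The second step uses the quotient map $\pi \colon \cF \to \cF/\cR$ to obtain (\ref{prop:main-4})--(\ref{prop:main-6}). Since $\dm{\cF/\cR} = (k^2-1) - k(k-1) = k-1$, the $k$ images $\pi(v_j)$ are linearly dependent, so there exist scalars $\lambda_j$, not all zero, with $\sum_j \lambda_j v_j \in \cR$. If some $\lambda_{y_0} = 0$, then $\sum_{j \neq y_0} \lambda_j v_j \in \cR \cap \cV_{-y_0} = \{0\}$, and the independence of $\{v_j : j \neq y_0\}$ inside $\cV_{-y_0}$ forces all $\lambda_j = 0$, a contradiction; hence every $\lambda_j \neq 0$. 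Using the direct-sum structure of $\cR$, I would write $-\sum_j \lambda_j v_j = \sum_j r_j$ uniquely with $r_j \in \repairSpc{j}[x][\cA]$, and redefine $\cmpVec{j}[x][\cA] := \lambda_j v_j + r_j$ and $\cmpSpc{j}[x][\cA] := \sspan{\cmpVec{j}[x][\cA]}$. Because $\lambda_j \neq 0$, the new vectors still sit in $\cW_j \setminus \repairSpc{j}[x][\cA]$, so (\ref{prop:main-2}) and (\ref{prop:main-3}) continue to hold, while by construction $\sum_j \cmpVec{j}[x][\cA] = 0$, giving (\ref{prop:main-4}). The identity $\pi(\cmpVec{j}[x][\cA]) = \lambda_j \pi(v_j)$ together with $\cR \cap \cV_{-y} = \{0\}$ shows that any $k-1$ of the $\cmpVec{j}[x][\cA]$'s remain linearly independent, yielding (\ref{prop:main-6}), and (\ref{prop:main-5}) is then immediate.

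The hard part is the first step: three separately-derived upper bounds must saturate simultaneously at $\dm{\cF} = k^2-1$ for every $y \in \cA$, and it is precisely the parameter choice $(\alpha,\beta) = (k, k-1)$ that makes this chain tight. The secondary but equally crucial observation is the numerical coincidence $\dm{\cF/\cR} = k-1 = |\cA|-1$, which is exactly the slack that permits the $k$ complement vectors to admit a single alignment relation with all coefficients nonzero --- the algebraic mechanism that makes this point a corner of the exact-repair tradeoff.
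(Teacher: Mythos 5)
Your proof is correct in substance and organizes the argument more economically than the paper. The paper establishes the structural facts piecemeal: a separate Proposition (``repair-dep'') proving that $\sum_j \vec v_j = \vec 0$ with $\vec v_j \in \repairSpc{j}[x][\cA]$ forces all $\vec v_j = \vec 0$ (i.e.\ $\cR = \bigoplus_j \repairSpc{j}[x][\cA]$), another Proposition (``W-S'') proving $\cW_i \not\subseteq \cR$, and then two more dimension-counting contradictions inside the main proof, one to justify $\dm{\repairSpc{j}[x][\cA]} = k-1$ and one to show every $\xi_j \neq 0$ in the dependence $\sum_j (\xi_j \vec u_j + \vec s_j) = \vec 0$. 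Your single tightness chain $\cF = \cR + \cV_{-y}$ plus subadditivity makes all of these facts drop out simultaneously as saturations, and your quotient-space reformulation turns the paper's direct vector manipulation for parts (iv)--(vi) into a clean argument about $k$ vectors in the $(k-1)$-dimensional quotient $\cF/\cR$; the nonvanishing of the $\lambda_j$'s then follows immediately from $\cR \cap \cV_{-y_0} = \{0\}$ (already known from Step 1) rather than requiring yet another ad hoc contradiction. Both proofs exploit the same extremal dimension count at $(\alpha,\beta,F) = (k,k-1,k^2-1)$; yours just factors out the repeated work.

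One wrinkle worth fixing: you declare at the outset that each $\cV_j$ is one-dimensional, ``generated by some $v_j \in \cW_j \setminus \repairSpc{j}[x][\cA]$,'' with $\cW_j = \repairSpc{j}[x][\cA] + \cV_j$. A priori you only know $\dm{\cW_j} \leq k$ and $\dm{\repairSpc{j}[x][\cA]} \leq k-1$, so the codimension of the repair subspace in $\cW_j$ could exceed one (and $\cW_j \setminus \repairSpc{j}[x][\cA]$ could even be empty if the storage is degenerate), which is precisely what part (iii) is supposed to establish. The correct move is to take $\cV_j$ to be an \emph{arbitrary} complement, so $\dm{\cV_j} = \dm{\cW_j} - \dm{\repairSpc{j}[x][\cA]}$ by definition; the chain $k^2-1 \leq \dm{\cR} + \dm{\cV_{-y}} \leq \dm{\repairSpc{y}[x][\cA]} + \sum_{j \neq y} \dm{\cW_j} \leq k^2-1$ then forces $\dm{\cW_j} = k$, $\dm{\repairSpc{j}[x][\cA]} = k-1$, and hence $\dm{\cV_j} = 1$, at which point a generator $v_j$ exists. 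With that reordering the circularity disappears and the rest of your argument goes through unchanged.
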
 

Partitioning of node spaces is given for a $(4,3,3)$ system in Figure~\ref{fig:533}, which can be useful to follow the statements above.  This proposition plays a central role in our arguments. The proof of the proposition can be found in Appendix~\ref{app:pr:prop:main}.

\subsection{Well-Aligned nodes}
\label{well-sec}
Our approach to achieve $(\alpha, \beta)$ for an $(n,k,k)$-DSS is recursive, i.e., we start with an existing $(k+1,k,k)$ ER code, and in each step append one new node to the system. Such new node should satisfy a set of conditions so that system maintain the data-recovery and exact-repair properties. It will be later shown that such conditions are fulfilled  by any node satisfy the following definition.

\begin{definition}
Let $\cC$ be an $(n,k,k)$ ER regeneration code with parameters $(\alpha,\beta,F)=(k,k-1,k^2-1)$. Fix  $\cA\subseteq \intv n$ with $|\cA|=k$, and let $x\notin \cA$ be a node index. An $\alpha$-dimensional subspace (node) $\node{\star}$ is called \emph{well-aligned with respect to a pair $(\cA;x)$} if it can be spanned by a set of basis vectors $\sB=\set{\nSym{1},\nSym{2},\dots,\nSym{k}}$, such that [see Figure~\ref{fig:well-aligned}]
\begin{enumerate}
\item[{\sf (1)}] $\nSym{i} = \sum_{j\in \cA}\sVec{i}{j} + \tau{\scriptstyle(i)}$\\ for  every $i=1,\dots, k$, where $\sVec{i}{j} \in \repairSpc{j}[x][\cA]$, and  $\tau{\scriptstyle(i)} \in \cmpSpc{}[x][\cA]$;\\[-3mm]

\item[{\sf (2)}] for each $j\in \cA$, there exists $i_j\in \intv k$ such that $\sVec{i_j}{j}=\vec 0$;\\[-3mm]

\item[{\sf (3)}] for every $j\in \cA$, set of vectors 
\[\set{\sVec{1}{j},\sVec{2}{j},,\dots,\sVec{k}{j}}\setminus \{\sVec{i_j}{j}\}\] 
are $(k-1)$ linearly independent vectors from $\repairSpc{j}[x][\cA]$.
\end{enumerate}
\label{def:well}
\end{definition}

 \begin{figure*}[t]
 	\centering
 	\includegraphics[width=0.89\textwidth]{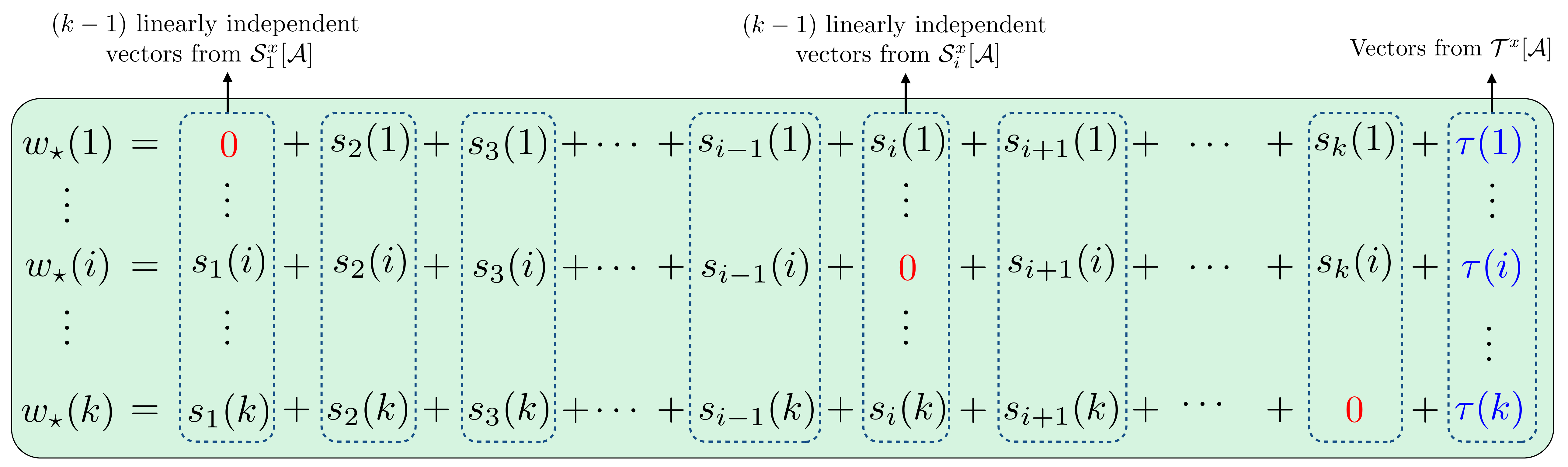}
 	\caption{Structure of a well-aligned node with respect to $\cA=\{1,2,\dots,k\}$.} 
	\label{fig:well-aligned}
 \end{figure*}

The next proposition shows that a well-aligned node w.r.t. $(\cA,x)$ satisfies some desired properties for an exact repair regeneration code. 

\begin{prop}
Let $\cC$ be an $(n,k,k)$ ER regeneration code with parameters $(\alpha,\beta,F)=(k,k-1,k^2-1)$. Consider $\cA$ to be a subset of $k$ nodes, and $x\in \intv n\setminus \cA$. Let $\node{\star}$ be a node that is well-aligned  w.r.t. $(\cA;x)$. Then $\cC^\star = \{\node i; i\in \cA\} \cup \{\node{\star}\}$ form a $(k+1,k,k)$ ER regeneration code with the same parameters $(\alpha,\beta,F)$.
\label{prop:well}
\end{prop}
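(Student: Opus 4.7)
The plan is to verify the three defining properties that make $\cC^\star$ a $(k+1,k,k)$ exact-repair regeneration code at $(\alpha,\beta,F)=(k,k-1,k^2-1)$: data recovery from every $k$-subset of its $k+1$ nodes, exact repair of $\node{\star}$ from helpers $\cA$, and exact repair of each $\node{j_0}$ with $j_0\in\cA$ from helpers $(\cA\setminus\{j_0\})\cup\{\star\}$. Throughout I work inside the direct-sum decomposition $\cF=\bigoplus_{j\in\cA}\repairSpc{j}[x][\cA]\oplus\cmpSpc{}[x][\cA]$ of Proposition~\ref{prop:main} and use the well-alignment expansion $\nSym{i}=\sum_{j\in\cA}\sVec{i}{j}+\tau(i)$ of Definition~\ref{def:well}. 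After relabeling, I may assume $\cA=\intv{k}$ and, for notational simplicity, that the map $j\mapsto i_j$ of Definition~\ref{def:well}(2) is the identity.

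For data recovery, the case in which the chosen $k$ nodes are exactly $\cA$ is inherited from $\cC$. Otherwise the chosen set is $\cB\cup\{\star\}$ with $\cB=\cA\setminus\{j_0\}$. By Proposition~\ref{prop:main}(vi) the partial sum $\sum_{j\in\cB}\cmpSpc{j}[x][\cA]$ already equals $\cmpSpc{}[x][\cA]$, giving $\sum_{j\in\cB}\node{j}=\bigoplus_{j\in\cB}\repairSpc{j}[x][\cA]\oplus\cmpSpc{}[x][\cA]$ of dimension $k(k-1)$. An element $\sum_i c_i\nSym{i}\in\node{\star}$ lies in this sum iff its $\repairSpc{j_0}$-component $\sum_i c_i\sVec{i}{j_0}$ vanishes, which by condition~(3) of Definition~\ref{def:well} forces $c_i=0$ for every $i\neq i_{j_0}$. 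Hence the intersection is one-dimensional and $\dim(\node{\star}+\sum_{j\in\cB}\node{j})=k+k(k-1)-1=F$.

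For repair of $\node{\star}$, I propose $\cS_j^\star=\sspan{\sVec{i}{j}+a_{i,j}\cmpVec{j}[x][\cA]:i\neq i_j}\subseteq\node{j}$ for coefficients $a_{i,j}\in\bbF_q$ to be solved; condition~(3) makes this $(k-1)$-dimensional. Matching the $\repairSpc$-components of $\nSym{i_0}$ reduces the requirement $\nSym{i_0}\in\sum_{j\in\cA}\cS_j^\star$ to the single $\cmpSpc{}[x][\cA]$-equation $\tau(i_0)=\sum_{j:i_j\neq i_0}a_{i_0,j}\cmpVec{j}[x][\cA]$, which is solvable since $\{\cmpVec{j}[x][\cA]:j\in\cA\setminus\{i_0\}\}$ spans $\cmpSpc{}[x][\cA]$ by Proposition~\ref{prop:main}(vi). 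For repair of $\node{j_0}$ with $j_0\in\cA$, I set $\cS_\star^{j_0}=\sspan{\nSym{i}:i\neq i_{j_0}}$ and, for each $j\in\cA\setminus\{j_0\}$, $\cS_j^{j_0}=\sspan{\sVec{i}{j}:i\in\intv{k}\setminus\{i_j,i_{j_0}\}}\oplus\cmpSpc{j}[x][\cA]$, of dimensions $k-1$ and $(k-2)+1=k-1$ respectively. Working modulo $\sum_{j\in\cA\setminus\{j_0\}}\cS_j^{j_0}$, each $\nSym{i}\in\cS_\star^{j_0}$ collapses to its $\repairSpc{j_0}$-component $\sVec{i}{j_0}$: the other $\sVec{i}{j'}$'s are absorbed into the designated vectors in $\cS_{j'}^{j_0}$, and $\tau(i)\in\cmpSpc{}[x][\cA]=\sum_{j\neq j_0}\cmpSpc{j}[x][\cA]$ is absorbed by the $\cmpSpc{j}[x][\cA]$-parts. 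The surviving $\{\sVec{i}{j_0}:i\neq i_{j_0}\}$ are linearly independent by condition~(3) and so span $\repairSpc{j_0}[x][\cA]$; combined with $\cmpVec{j_0}[x][\cA]\in\cmpSpc{}[x][\cA]\subseteq\sum_{j\neq j_0}\cS_j^{j_0}$, the joint sum covers $\node{j_0}=\repairSpc{j_0}[x][\cA]\oplus\sspan{\cmpVec{j_0}[x][\cA]}$.

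The main obstacle will be the bookkeeping in the repair of $\node{j_0}$: I must verify that each cross-term in $\nSym{i}=\sum_{j'}\sVec{i}{j'}+\tau(i)$ is absorbed exactly as claimed, and that the dimensions of the $\cS_j^{j_0}$'s add up without redundancy into a span containing $\node{j_0}$. A secondary delicate point is the non-injectivity allowed by Definition~\ref{def:well}(2): when $j\mapsto i_j$ fails to be a bijection, the index set $\{j:i_j\neq i_0\}$ appearing in the $\tau(i_0)$-equation above may fall short of spanning $\cmpSpc{}[x][\cA]$, so the simple formulas must be adjusted, most likely by first exhibiting a well-aligned basis in which $j\mapsto i_j$ is a bijection.
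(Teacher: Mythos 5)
Your two node-repair arguments match the paper's almost verbatim: the same helper subspaces $\cS_j^\star=\sspan{\sVec{i}{j}+a_{i,j}\cmpVec{j}[x][\cA]:i\neq i_j}$ for repairing $\node{\star}$ (the paper writes $\theta_{i,j}$ instead of $a_{i,j}$), and the same $\cS_\star^{j_0}=\sspan{\nSym{i}:i\neq i_{j_0}}$, $\cS_j^{j_0}=\sspan{\sVec{i}{j}:i\notin\{i_j,i_{j_0}\}}\oplus\cmpSpc{j}[x][\cA]$ with the same interference-cancellation bookkeeping for repairing $\node{j_0}$. Where you genuinely diverge is data recovery. The paper derives it as a corollary of the repair argument it has just run: it shows $\repairSpc{j_0}[x][\cA]\subseteq\sum_{i\in\cB}\repairSpc{i}[j_0][\cB]$ and $\cmpSpc{j_0}[x][\cA]\subseteq\cmpSpc{}[x][\cA]$ are both inside $\sum_{i\in\cB}\node{i}$, hence $\cF=\node{j_0}+\sum_{i\neq j_0}\node{i}\subseteq\sum_{i\in\cB}\node{i}$. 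You instead do a direct dimension count inside the decomposition $\cF=\bigoplus_{j\in\cA}\repairSpc{j}[x][\cA]\oplus\cmpSpc{}[x][\cA]$: you identify $\sum_{j\in\cB}\node{j}$ exactly as a $k(k-1)$-dimensional summand, show via condition~(3) that $\node{\star}\cap\sum_{j\in\cB}\node{j}=\sspan{\nSym{i_{j_0}}}$ is one-dimensional, and conclude $\dim(\node{\star}+\sum_{j\in\cB}\node{j})=F$. Your version is self-contained and does not need the repair machinery; the paper's version is shorter given that the repair machinery is already in place, and it exhibits explicitly \emph{which} $\beta$-dimensional pieces of the helper nodes suffice, which is the stronger statement an ER code actually requires. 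Both are correct.

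Your closing observation about $j\mapsto i_j$ possibly failing to be a bijection is a fair catch of a soft spot in the paper itself: Definition~\ref{def:well}(2) as literally written does not force injectivity, yet the paper's ``w.l.o.g.\ $i_j=j$'' step, Figure~\ref{fig:well-aligned}, and the counting argument in Section~\ref{sec:prob} all silently assume the identity map. (If the map is not injective, then after choosing the $\cS_j^\star$'s the span of $\{\cmpVec{j}[x][\cA]:i_j\neq i_0\}$ can indeed fall short of $\cmpSpc{}[x][\cA]$, and even $\dim\node{\star}=k$ is no longer guaranteed by the stated conditions.) The intended reading, consistent with how the ensemble is actually sampled and counted in Section~\ref{sec:prob}, is that well-alignedness requires $i_j=j$ after a relabeling, i.e.\ a bijection; under that reading your proof and the paper's both go through, and your warning about ``first exhibiting a well-aligned basis in which $j\mapsto i_j$ is a bijection'' is precisely the clarification the definition ought to carry.
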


\begin{proof}[Proof of Proposition~\ref{prop:well}]
	In order to prove this proposition we need to show  data-recovery and node repair properties of $\cC^*$.

\textbf{Node repair:} 
Note that $\cC^\star$ has only $(k+1)$ nodes, and once the failed node $x\in \cA \cup \{\star\}$ is chosen, the helper nodes are uniquely determined. To simplify the presentation, w.o.l.g. we may assume $\cA=\{1,2,\dots,k\}$ and $i_j=j$ for every $j\in \cA$, i.e., $\sVec{j}{j}=\vec{0}$. We can identify the following two cases. 
\begin{enumerate}
\item \textit{Repair of  $\node{\star}$ using nodes in $\cA$:} Note that $\node{\star}=\sspan{\sB}$. For this repair process,  we reconstruct each vector in $\sB$ using repair vectors sent from nodes in $\cA$. 
Recall by part \eqref{prop:main-6} of Proposition~\ref{prop:main} that since $\tau(i)\in\cmpSpc{}[x][]$, it can be  represented as
	\begin{equation*}
	 \label{tau-eq}
	 \tau(i)=\sum\nolimits_{j\in\cA\setminus\set{i}}\theta_{i,j}\cmpVec{j}[x][\cA]
	\end{equation*}
	for every $i\in \cA$. 
Now, we define the repair subspace $\repairSpc{j}[\star][\cA]$ sent by node $j\in \cA$ to node $\star$ by
\[
\repairSpc{j}[\star][\cA] = \sspan{\set{\sVec{i}{j}+\theta_{i,j}\cmpVec{j}[x][\cA]:j=1,2,\dots,k,j\neq i}}.
\]
It is clear that $\repairSpc{j}[\star][\cA]$ is spanned by only $(k-1)$ vectors, and hence its dimension does not exceed $\beta$. Now, vector $\nSym{i}\in \sB$ can be reconstructed via
\begin{align*}
\nSym{i} &= 
\sum_{j\in\cA\setminus\set{i}}\sVec{i}{j}+ \tau(i) \\
&= \sum_{j\in\cA\setminus\set{i}} 
\left(\sVec{i}{j}+ \theta_{i,j}\cmpVec{j}[x][\cA]\right) \in \sum_{j\in\cA\setminus\set{i}} \repairSpc{j}[*][\cA].
\end{align*}
	
\item \textit{Repair of node $\ell\in\cA$ using nodes in $\cB\triangleq \{\star\} \cup (\cA\setminus\set{\ell})$:} This repair process  is more technical. However, the illustration in Figure~\ref{fig:repair} can be helpful to follow the proof. Recall that $\node \ell = \repairSpc{\ell}[x][\cA] \oplus \sspan{\cmpVec{j}[x][\cA]}$. In order to repair $\node \ell$, we rebuild $\repairSpc{\ell}[x][\cA]$ and $\sspan{\cmpVec{j}[x][\cA]}$, separately. To reconstruct $\repairSpc{\ell}[x][\cA]$, node $\node{\star}$ sends a subspace
\[
\repairSpc{\star}[\ell][\cB] \triangleq \sspan{\{\nSym{i}: i\neq \ell\}}.
\] 
Note that $\repairSpc{\star}[\ell][\cB]$ is spanned by $(k-1)$ linearly independent vectors, thus $\dm{\repairSpc{\star}[\ell][\cB]}=k-1=\beta$. Moreover, each vector in 
$\{\nSym{i}: i\neq \ell\}$ can be written as 
\begin{align}
\nSym{i} = \sVec{i}{\ell} + \sum_{j\in \cA\setminus\{i,\ell\}} \sVec{i}{j} + \tau(i),
\label{eq:10}
\end{align}
which is indeed a vector from $\repairSpc{\ell}[x][\cA]$ that is corrupted by some interference. 

The repair data sent by other nodes in $\cA\setminus \{\ell\}$ play two important roles: {\sf (i)} cancel the interference in $\nSym{i}$'s, and {\sf (ii)} recover the vector space $\sspan{\cmpVec{\ell}[x][\cA]}$. Let us define 
\[
\repairSpc{j}[\ell][\cB] = \sspan{\set{\sVec{i}{j}: i\in \cA\setminus\set{j,\ell}}} \oplus 
\sspan{\cmpVec{j}[x][\cA]}.
\]
First, it is clear that $\repairSpc{j}[\ell][\cB] \subseteq \node j$. Moreover since $|\set{\sVec{i}{j}: i\in \cA\setminus\set{j,\ell}}|=k-2$, we have $\dm{\repairSpc{j}[\ell][\cB]}\leq (k-2) +1 = k-1 =\beta$. Now, from part \eqref{prop:main-6} of Proposition~\ref{prop:main} node $\ell$ can first recover $\cmpSpc{}[x][\cA]$ from the $(k-1)$ received vectors  $\{\cmpVec{j}[x][\cA]: j\in \cA, j\neq \ell\}$. Once $\cmpSpc{}[x][\cA]$ is rebuilt, the subspace $\cmpSpc{\ell}[x][\cA] \subseteq \cmpSpc{}[x][\cA]$ can be also recovered. 

Furthermore, since $\tau(i)\in \cmpSpc{}[x][\cA]$, they can be all reconstructed and canceled from $\nSym{i}$ in \eqref{eq:10}. The remaining interference in $\nSym{i}$, given by $\sum_{j\in \cA\setminus\{i,\ell\}} \sVec{i}{j}$, can be canceled since $\sVec{i}{j}\in \repairSpc{j}[\ell][\cB]$ for every $j\in \cA\setminus \{i,\ell\}$. By removing all interference,  vectors
\[ 
\{
\sVec{1}{\ell}, \dots, \sVec{\ell-1}{\ell}, \sVec{\ell+1}{\ell}, \dots, \sVec{k}{\ell}
\}
\]
can be reconstructed at the failed node. Then, since these vectors are linearly independent (by part {\sf (3)} of Definition~\ref{def:well}), they can completeley span $\repairSpc{\ell}[x][\cA]$, which together with $\cmpSpc{\ell}[x][\cA]$, can span $\node \ell$. 

\end{enumerate}
 \begin{figure*}[t]
 	\centering
 	\includegraphics[width=.85\textwidth,scale=1]{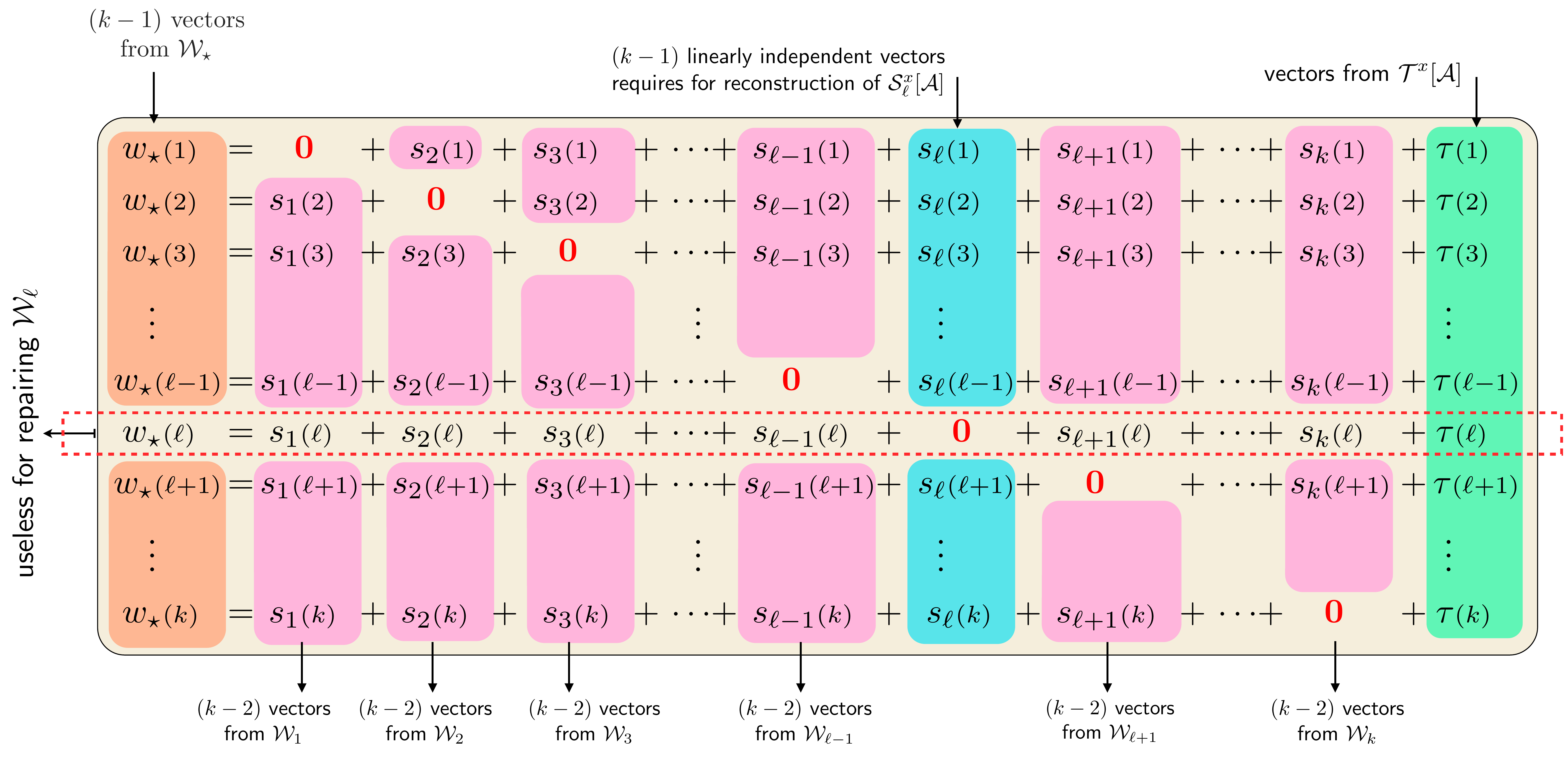}
 	\caption{Repair process of node $\ell$ via  $\{\star\} \cup \{1,2,\dots, \ell-1, \ell+1,\dots, k\}$ .}  	\vspace{-3mm}
 	\label{fig:repair}
 \end{figure*}

\textbf{Data recovery:} Recovering file from nodes in $\cA$ is clear, since nodes in $\cA$  were already a part of an $(n,k,k)$ ER code. Consider a set $\cB=\{*\} \cup \cA\setminus\set {\ell}$.  The argument used to proof repairability of $\node \ell$ shows that 
\[
\repairSpc{\ell}[x][\cA] \subseteq  \sum_{i\in \cB} \repairSpc{i}[\ell][\cB] \subseteq \sum_{i\in \cB} \node i.
\]
Moreover, part \eqref{prop:main-6} of Proposition~\ref{prop:main} implies 
\[
\cmpSpc{\ell}[x][\cA]\subseteq  \cmpSpc{}[x][\cA] = \sum_{i\in \cA\setminus \set{\ell}} \cmpSpc{i}[\ell][\cB] \subseteq \sum_{i\in \cA \setminus \{\ell\}} \node i.
\]
Hence, we have 
\begin{align*}
\cF &= \node \ell + \sum_{i\in \cA \setminus \{\ell\}} \node i
\subseteq (\repairSpc{\ell}[x][\cA] + \cmpSpc{\ell}[x][\cA]) + \sum_{i\in \cA \setminus \{\ell\}} \node i\\
&\subseteq \node{\star} + \sum_{i\in \cA \setminus \{\ell\}} \node i = \sum_{i\in \cB } \node i,
\end{align*}
which implies that $\cF$ can be recovered from nodes in $\cB$. This completes the proof of the data-recovery property. 
\end{proof}

\begin{remark}
Note that a subspace $\node{\star}$ can be  well aligned w.r.t. $(\cA;x)$ and not be well-aligned w.r.t. $(\cA;y)$ for $x\neq y$. However, nodes in $\cA$ together with $\node{\star}$ form a $(k+1,k,k)$ ER regeneration code as long as there exist at least one $x$ such that $\node{\star}$ is well-aligned w.r.t. $(\cA;x)$.
\end{remark}

\section{Probabilistic Method for Appending New Nodes}
\label{sec:prob}
Consider an $(n,k,k)$ ER regeneration code, $\cC=\{\node i: i\in \intv n\}$. Our goal is to append a new node $\node{\star}$ to this system, so that $\tilde{\cC} = \cC \cup \{\node{\star}\}$ maintain properties of ER regeneration codes. Such a code will be an $(n+1,k,k)$ ER code.  

It is easy to see that the necessary and sufficient condition for a new node $\node{\star}$ to be feasible to be added to $\tilde{\cC}$ is the following: $\tilde{\cC}$ is an $(n+1,k,k)$ ER regeneration code if and only if $\cC^\star_A =\{\node i: i\in \cA\} \cup \{\node{\star}\}$ form a $(k+1,k,k)$ ER regeneration code for every $\cA\subseteq \intv n$ with $|\cA|=k$.

Furthermore, Proposition~\ref{prop:well} shows that in order to $\cC^\star_A =\{\node i: i\in \cA\} \cup \{\node{\star}\}$ be a $(k+1,k,k)$ ER code, it suffices that $\node{\star}$ be well-aligned w.r.t. $(\cA,x)$ for some $x\in \intv n \setminus \cA$. 
	
In this section we will show that a randomly chosen $\alpha$-dimensional subspace $\node{\star}$ is well-aligned w.r.t. every choice of $(\cA;x)$ (with $|\cA|=k$) for some $x\in \cA$, with high probability, and hence can be add to the current code with $n$ nodes. More formally, we prove the following proposition.

\begin{prop}
Let $\cC=\{\cW_i: i\in \intv n\}$ be an $(n,k,k)$ ER regeneration code storing a data space $\cF$ over $\bbF_q$, and $\node{\star}$ be an $\alpha$-dimensional subspace of $\cF$ drawn uniformly at random. Then for any $\epsilon>0$, there exists a large enough $q$ such that 
\[
\P [\cC \cup \{\node{\star}\} \text{ is an } (n+1,k,k) \text{ ER code}] > 1-\epsilon.
\]
\label{prop:pr}
\vspace{-20pt}
\end{prop}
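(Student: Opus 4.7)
My plan is to invoke Proposition \ref{prop:well} to reduce the statement to a claim about well-alignment of a random subspace, union bound over the $\binom{n}{k}$ size-$k$ subsets $\cA \subseteq \intv n$, and bound each term by phrasing well-alignment as the non-vanishing of a polynomial in the coordinates of $\node{\star}$ and invoking the Schwartz--Zippel lemma. Since $n > k$, for each such $\cA$ I fix an arbitrary $x(\cA) \in \intv n \setminus \cA$; by Proposition \ref{prop:well}, it then suffices to show that for each fixed pair $(\cA, x)$,
\[
\P\bigl[\node{\star} \text{ is not well-aligned w.r.t.\ } (\cA; x)\bigr] \leq c_k/q,
\]
for a constant $c_k$ depending only on $k$, since the desired conclusion follows by a union bound and by choosing $q$ large enough.

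Fix such a pair $(\cA, x)$. Parts \eqref{prop:main-3} and \eqref{prop:main-5} of Proposition \ref{prop:main}, together with the dimension count $k(k-1)+(k-1)=k^2-1=F$, give the direct-sum decomposition $\cF = \bigoplus_{j\in\cA} \repairSpc{j}[x][\cA] \oplus \cmpSpc{}[x][\cA]$. Choosing a basis of $\cF$ adapted to this decomposition, any $k$-dimensional $\node{\star}$ is the row span of a $k \times (k^2-1)$ matrix $M = [M_1 | \cdots | M_k | T]$ with $M_j \in \bbF_q^{k \times (k-1)}$ and $T \in \bbF_q^{k \times (k-1)}$; sampling $\node{\star}$ uniformly is equivalent, up to an event of probability $O(1/q)$, to sampling $M$ uniformly. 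Taking $i_j=j$ in Definition \ref{def:well}, well-alignment w.r.t.\ $(\cA; x)$ is equivalent to the existence of an invertible $G \in \bbF_q^{k\times k}$ such that the $j$-th row of $GM_j$ vanishes while the remaining $k-1$ rows are linearly independent, for each $j \in \cA$. On the Zariski-open locus where every $M_j$ has full column rank, the left null space of $M_j$ is one-dimensional and spanned by a canonical vector $g^{(j)}(M)$ whose entries are $(k-1)\times(k-1)$ signed minors of $M_j$; setting $G_j = g^{(j)}$ reduces well-alignment to the simultaneous non-vanishing of a finite list of polynomials in the entries of $M$: (i) a nonzero $(k-1)\times(k-1)$ minor of each $M_j$; (ii) $\det(G)\neq 0$; and (iii) $\det\bigl([g^{(i)}M_j]_{i\neq j}\bigr)\neq 0$ for each $j \in \cA$. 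Their product $P(M)$ has total degree bounded by a constant depending only on $k$.

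Non-triviality of $P$ is established by exhibiting a single $M$ at which every factor is non-zero: take $M_j$ equal to the $(k-1)\times(k-1)$ identity with a zero row inserted in position $j$, and $T$ arbitrary. Then $g^{(j)}=\pm e_j \in \bbF_q^k$, the matrix $G$ is (up to signs) the identity, and each $[g^{(i)}M_j]_{i\neq j}$ is a permutation matrix, so every factor of $P$ is non-zero. The Schwartz--Zippel lemma then bounds $\P[P(M)=0] \leq c_k/q$, and union-bounding over the $\binom{n}{k}$ choices of $\cA$ yields $\P[\cC\cup\{\node{\star}\}\text{ fails to be an }(n+1,k,k)\text{ ER code}] \leq \binom{n}{k}\, c_k/q$, which is smaller than $\epsilon$ for $q$ sufficiently large. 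The main obstacle I anticipate is the bookkeeping around the non-generic locus (where some $M_j$ drops column rank, so the canonical $g^{(j)}$ is not well defined): this case must either be absorbed into $P$ by including the relevant minor as an additional factor or handled by a separate bound, but its contribution is itself $O(1/q)$ and does not affect the conclusion.
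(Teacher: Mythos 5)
Your argument is correct, and it takes a genuinely different route from the paper's. The paper also union-bounds over the $\binom{n}{k}$ choices of $\cA$ and fixes a single $x_0\in\intv n\setminus\cA_0$, but it then evaluates $\P[E(\cA_0;x_0)]$ by \emph{explicit enumeration}: it counts the bases described in Definition~\ref{def:well} (choosing, for each $j\in\cA_0$, a tuple of $k-1$ linearly independent vectors from the $(k-1)$-dimensional space $\repairSpc{j}[x_0][\cA_0]$, and arbitrary $\tau(i)\in\cmpSpc{}[x_0][\cA_0]$), normalizes by the number of bases per subspace and by the Gaussian-binomial count of $\alpha$-dimensional subspaces of $\cF$, and shows the ratio is $1-o(1)$. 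You instead coordinatize $\node{\star}$ as the row span of a random $k\times F$ matrix $M$ adapted to the decomposition $\cF=\bigoplus_{j\in\cA}\repairSpc{j}[x][\cA]\oplus\cmpSpc{}[x][\cA]$ (which does follow from Proposition~\ref{prop:main} parts (iii) and (v) together with Proposition~\ref{repair-dep} and the dimension count $k(k-1)+(k-1)=k^2-1=F$), encode well-alignment with $i_j=j$ as non-vanishing of a bounded-degree polynomial $P(M)$, exhibit a witness, and invoke Schwartz--Zippel. Both yield $\P[E(\cA_0;x_0)]\geq 1-O(1/q)$; your version is cleaner and sidesteps the Gaussian-binomial bookkeeping, at the price of only lower-bounding (you count only well-aligned subspaces admitting the bijection $i_j=j$), which is all that is needed. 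One remark that actually closes the loose end you flag at the end: the ``non-generic locus'' is handled for free. Defining $g^{(j)}(M)$ as the vector of signed $(k-1)\times(k-1)$ minors of $M_j$ makes $g^{(j)}$ a bona fide polynomial vector satisfying $g^{(j)}M_j=0$ identically; it vanishes exactly when $M_j$ drops column rank, and that degeneracy already forces $\det G=0$. So your factor (i) is redundant, and $P(M)=\det(G)\cdot\prod_{j\in\cA}\det\bigl([g^{(i)}M_j]_{i\neq j}\bigr)$ alone suffices; moreover $P(M)\neq 0$ also forces $\mathrm{rank}(M)=k$ (project a dependence $\sum_i c_i\nSym{i}=0$ onto two blocks $\repairSpc{j}$), so no separate conditioning on full-rank $M$ is required either. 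Finally, $P$ evaluates to $\pm1$ at your witness, so it is non-trivial over every $\bbF_q$, not merely over fields of large characteristic.
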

 The rest of this section is dedicated to prove Proposition~\ref{prop:pr}.
Let $E(\cA;x)$  be the event that a random $\node{\star}$ is well-aligned w.r.t. $(\cA;x)$ for $\cA\subseteq \intv n$ and $x\notin \cA$. Moreover, define
\[E(\cA)=\bigcup_{x\in\intv n \setminus \cA}E(\cA;x),\]
which is the event of existence of some $x$ for which $\node{\star}$ is well-aligned w.r.t. $(\cA,x)$.

Now let $\node{\star}$ be a subspace of $\cF$ picked uniformly at random among all $\alpha$-dimensional subspaces. Then we have
\begin{align}
\P  \left[ \bigcap_{\cA \subset \intv{n},\size{\cA}=k} E(\cA) \right] &= 1-\P \left[ \bigcup_{\cA \subset \intv{n},\size{\cA}=k} E^c(\cA) \right]\nonumber\\
&\geq 1-\sum_{\cA \subset \intv{n},\size{\cA}=k} \P \left[E^c(\cA) \right]\nonumber\\
&= 1-\sum_{\cA \subset \intv{n},\size{\cA}=k} (1-\P \left[E(\cA) \right])\nonumber\\
&= 1-\binom{n}{k} (1-\P \left[E(\cA_0) \right]),
\label{eq:pr-main}
\end{align}
where $\cA_0 =\{1,2,\dots,k\}$ is a fixed subset of $\intv n$. here, the last equality 
is due to the symmetric structure of nodes, which implies $\P[E(\cA)]$ does not  depend on the realization of $\cA$.

Next, let $x_0$ be a fixed node in $\intv{n}\setminus\cA_0$. We have 
$E(\cA_0;x_0) \subseteq E(\cA_0)$. Hence, 
\[
\P \left[E(\cA_0) \right]) \geq \P \left[E(\cA_0;x_0) \right]),
\]
Therefore, we can further lower bound RHS of \eqref{eq:pr-main} by 
	
	\begin{equation}
	\begin{split}
	\P  \left[ \bigcap_{\cA \subset \intv{n},\size{\cA}=k} E(\cA) \right] 
	&\geq 1-\binom{n}{k} (1-\P \left[E(\cA_0;x_0) \right]).
	\end{split}
	\label{pr-main2}
	\end{equation}
	
Thus, in order to show 
$
\P  \left[ \bigcap_{\cA \subset \intv{n},\size{\cA}=k} E(\cA) \right] \xrightarrow{q\rightarrow \infty} 1
$,  
it suffices to prove 
$\P \left[E(\cA_0;x_0)\right] \rightarrow 0$ as $q$ grows. 
To this end, we can evaluate $\P \left[E(\cA_0) \right]$ by
\begin{align}
\P \left[E(\cA_0;x_0) \right]=\frac{\#\text{ well-aligned subspaces w.r.t }(A_0;x_0)}{\# \;\alpha\text{-dimesnional subspaces of }\cF}.
\label{eq:ratio}
\end{align}

It is well-known (e.g. see \cite{knuth1971subspaces}) that the denominator in \eqref{eq:ratio} for $\dm{\cF}=F$ is given by 
\begin{equation}
	\frac{\prod_{h=0}^{\alpha-1}(q^F-q^h)}{\prod_{h=0}^{\alpha-1}(q^\alpha-q^h)}=\frac{\prod_{h=0}^{k-1}(q^{k^2-1}-q^h)}{\prod_{h=0}^{k-1}(q^k-q^h)}
	\label{eq:pr-1}	
\end{equation}
	
In order to compute the nominator, we must count the total number of well-aligned nodes w.r.t. a fixed pair $(\cA_0, x_0)$. Note that  for each  $i\in \cA$, we need to pick a total of $(k-1)$ vectors, namely $\{\sVec{j}{i}: j\in \cA\setminus\{i\}\}$, from a $(k-1)$-dimensional space $\repairSpc{i}[x_0][\cA]$ (see the $i$-th column  in Figure~\ref{fig:well-aligned}). For a given $i$, this can be done in $\prod_{h=0}^{k-2}(q^{k-1}-q^h)$ ways. 

On the other hand, vectors $\tau(j)$ in the last column are picked arbitrarily and independent of each other from an $(k-1)$-dimensional space $\cmpSpc{}[x][\cA]$. Therefore there are a total $q^{k-1}$ choices for each $\tau(i)$. 

Hence, the number of choices for the basis set of $\node{\star}$ is 
\begin{equation*}
	\label{pr-4}
	\prod_{i=1}^k \left[ \prod_{h=0}^{k-2}(q^{k-1}\hspace{-2pt}-q^h) \right] \cdot \prod_{j=1}^k q^{k-1} \hspace{-2pt} = \hspace{-2pt}
	\left[\prod_{h=0}^{k-2}(q^{k-1}\hspace{-2pt}-q^h)\right]^k \hspace{-4pt} q^{k(k-1)}.
\end{equation*}
Finally note that once vectors $\sVec{i}{j}$ and $\tau(j)$ are fixed for $i,j\in \intv k$,  each basis vector $w_\star(j)$ in Figure~\ref{fig:well-aligned} can be scaled by any non-zero $\xi\in \bbF_q$, while the resulting vector space is preserved. Considering this fact for $k$ basis vectors of $\node{\star}$, the nominator of  \eqref{eq:ratio} can be evaluated by

\begin{equation}
	\label{pr-5}
	\frac{\left[\prod_{h=0}^{k-2}(q^{k-1}-q^h)\right]^k.q^{k(k-1)}}{q^k}.
\end{equation}

Replacing \eqref{eq:pr-1} and \eqref{pr-5} in \eqref{eq:ratio}, we get 
\begin{align}
	\P \left[E(\cA_0;x_0)\right]
	&=\frac{\left[\prod\limits_{h=0}^{k-2}(q^{k-1}-q^h)\right]^k\cdot q^{k(k-1)}\cdot\prod\limits_{h=0}^{k-1}(q^k-q^h)}{\left[\prod\limits_{h=0}^{k-1}(q^{k^2-1}-q^h)\right] \cdot q^{k}}\nonumber\\
	&=\frac{q^{k^3}(1-o(1))}{q^{k^3}(1-o(1))}=1-o(1),
		\label{pr-6}
\end{align}
where $o(1)$ vanishes as $q\rightarrow \infty$. This completes the proof.

\section{Proof of the Main Results}
\label{sec:proof}
We have developed all the techniques need to prove the main results of Section~\ref{sec:result} in the previous sections.
\begin{proof}[Proof of Theorem~\ref{thm:main}]
We present the proof of the theorem using a recursive argument. We first note that for $(k+1,k,k)$ DSS,  exact-repair regeneration codes with parameters $(\alpha,\beta,F)=(k,k-1,k^2-1)$ are introduced independently in \cite{tian2014layered} and \cite{goparaju2014new}. We use them as the starting point of the recursive argument. 

Moreover, Proposition~\ref{prop:pr} implies that, for large enough $q$, a randomly sampled $\alpha$-dimensional subspace can be appended to an $(n,k,k)$ system to form an $(n+1,k,k)$ exact-repair regeneration code. Hence, we can start with $n=k+1$, and repeat picking random new nodes $\node{\star}$ and checking whether they satisfy the desired properties. By repeating this procedure we can get as many number of nodes needed while the entire system preserves the exact-repair property. 
\end{proof}

\begin{proof}[Proof of Corollary~\ref{cor:partial}]
The optimality of this tradeoff can be simply seen using a cut-set argument
\begin{align*}
F &= H(W_1,\dots, W_k) \\
&\leq H(W_1,\dots, W_{k-1}) + H(W_{k}| W_{1},\dots, W_{k-1})\\
&\leq (k-1)\alpha + \beta,
\end{align*}
which implies $\bar{\beta} \geq 1- (k-1) \bar{\alpha}$. Moreover, for $(k-1)/k \leq \bar{\alpha}/\bar{\beta} \leq 1$, the two extreme points of this bound are given by
\begin{align*}
(\bar{\alpha}, \bar{\beta}) = \left(\frac{1}{k},\frac{1}{k}\right) \qquad\text{and}
\qquad (\bar{\alpha}, \bar{\beta}) = \left(\frac{k}{k^2-1},\frac{k-1}{k^2-1}\right).
\end{align*}
Note that the first point is MSR, for which achievability is known for arbitrary value of $n$ \cite{suh2010existence, cadambe2010distributed}. Achievability of the second point is proved in Theorem~\ref{thm:main}. This completes the proof. 
\end{proof}

\begin{proof}[Proof of Corollary~\ref{cor:n33}]
Note that since an $(n,3,3)$ DSS includes a $(4,3,3)$ DSS, its exact-repair tradeoff cannot be lower than that of a $(4,3,3)$ DSS. On the other hand, the exact repair tradeoff for a $(4,3,3)$ system is characterized by Tian \cite{tian2013rate} as given in the corollary. 

In order to show that this tradeoff is indeed achivable, we can focus on the corner points, since the intermediate points can be achieved by space-sharing. There are three corner points for this region, namely, MBR, MSR, and one given by $(\bar{\alpha}, \bar{\beta}) = \left(\frac{k}{k^2-1},\frac{k-1}{k^2-1}\right)$. Achievability of the first two points is known for arbitrary $n$ \cite{rashmi2011optimal, suh2010existence, cadambe2010distributed}. For the middle point, however, Theorem~\ref{thm:main} guarantees existence of $n$ nodes maintaining data-recovery and exact-repair prperties, for large enough $q$. Hence, the entire boundary of the tradeoff is achievable for any value of $n\geq k+1$.   
\end{proof}

 \begin{figure*}[t]
	 	
	 	\centering
	 	\includegraphics[width=0.95\textwidth]{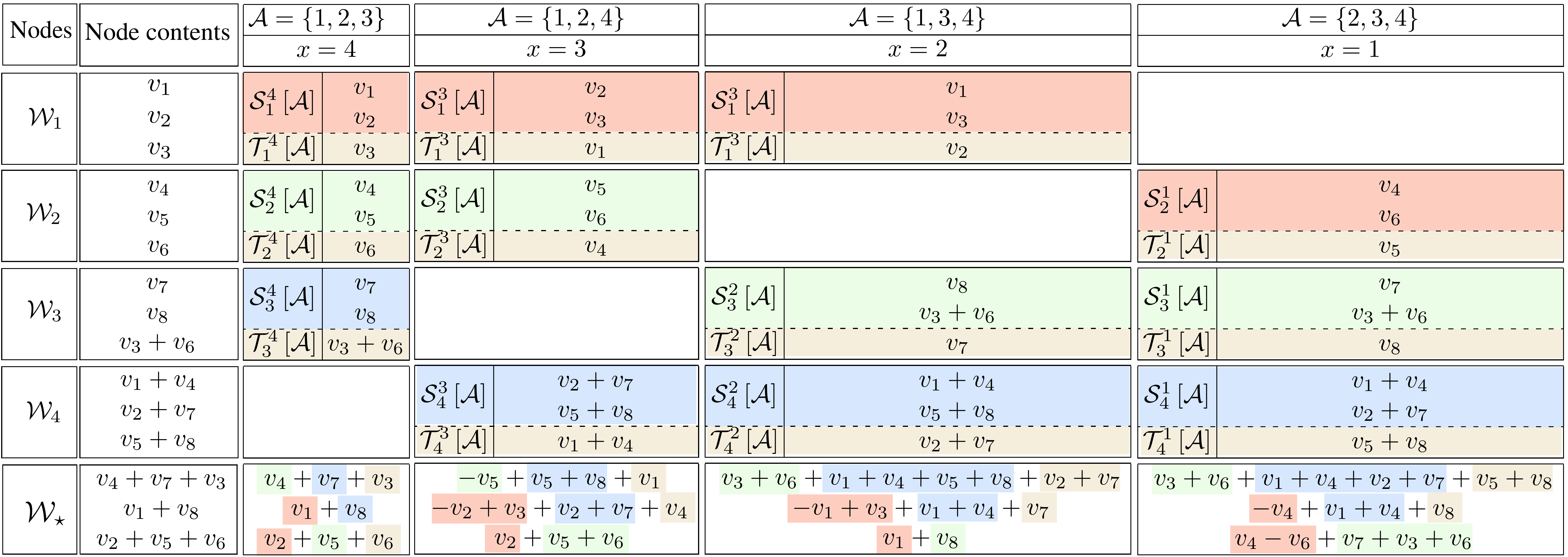}
	 	\caption{Construction of the $5$-th node in a $(5,3,3)$-DSS  based on a known $(4,3,3)$. The first demonstrates node contents, and the next columns show $\node{\star}$ is simultaneously well-aligned w.r.t. every $\cA$ with $|\cA|=3$.} \label{fig:533}
	 	\vspace{-2mm}
\end{figure*}

\appendices
\section{Proof of proposition~\ref{prop:main}}
\label{app:pr:prop:main}
The proof of this proposition is based analysis of the repair subspaces sent by nodes in $\cA$ to repair node $x$, namely $\repairSpc{j}[x][\cA]$ for $j\in \cA$. In the rest of this section, since $x$ and $\cA$ are fixed, we may drop superscript $x$ and parameter $\cA$ and write $\cS_j\triangleq \repairSpc{j}[x][\cA]$ for the sake of simplicity. 

Before we present the proof of Proposition~\ref{prop:main}, we state and prove a few more properties of the subspaces sent by nodes in the repair process of another node. 

\begin{prop}
	\label{repair-dep}
	Let $\cC$ be a linear exact-repair regenerating code operating at an optimum point $(\alpha,\beta,F)$. The repair subspaces sent by nodes in $\cA$ in order to repair node $x$ are mutually linearly independent. That is, if 
$\sum_{j\in \cA}\vec{v}_j = \vec{0}$, 
	holds for some $\vec{v}_j\in \cS_j$, then $\vec{v}_j=\vec{0}$ for every $j\in \cA$. 
\end{prop}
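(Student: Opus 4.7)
The plan is to prove a somewhat stronger statement: for each $j\in\cA$, the repair subspace $\cS_j$ is linearly disjoint from the sum of all the other full node subspaces, namely $\cS_j\cap\sum_{i\in\cA\setminus\{j\}}\cW_i=\{\vec{0}\}$. This immediately implies the desired mutual independence of $\{\cS_i:i\in\cA\}$: any nontrivial dependency $\sum_{i\in\cA}\vec{v}_i=\vec{0}$ with $\vec{v}_i\in\cS_i$ and some $\vec{v}_{j^\star}\neq\vec{0}$ would force $\vec{v}_{j^\star}=-\sum_{i\neq j^\star}\vec{v}_i\in\cS_{j^\star}\cap\sum_{i\neq j^\star}\cS_i\subseteq\cS_{j^\star}\cap\sum_{i\neq j^\star}\cW_i$, a contradiction.

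To prove the disjointness for a fixed $j$, the key idea is to pick the alternative recovery set $\cB\triangleq(\cA\setminus\{j\})\cup\{x\}$ of cardinality $k$. Data recovery on $\cB$ gives $\cF=\cW_x+\sum_{i\in\cA\setminus\{j\}}\cW_i$, and combining this with the repair identity $\cW_x\subseteq\sum_{i\in\cA}\cS_i\subseteq\cS_j+\sum_{i\in\cA\setminus\{j\}}\cW_i$ yields
\[
\cF=\cS_j+\sum_{i\in\cA\setminus\{j\}}\cW_i,
\]
so the right-hand side has dimension exactly $F=k^2-1$. The standard identity $\dim{U+V}=\dim{U}+\dim{V}-\dim{U\cap V}$, combined with the trivial bounds $\dim{\cS_j}\leq\beta=k-1$ and $\dim{\sum_{i\in\cA\setminus\{j\}}\cW_i}\leq(k-1)\alpha=k(k-1)$, then gives
\[
\dim{\cS_j\cap\sum_{i\in\cA\setminus\{j\}}\cW_i}\leq(k-1)+k(k-1)-(k^2-1)=0,
\]
forcing the intersection to be trivial.

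There is no real obstacle in this argument; once the alternative recovery subset $\cB$ is chosen so as to leverage both properties simultaneously, the rest is a one-line dimension count. The only delicate point worth highlighting is that the argument crucially relies on the optimal parameters $(\alpha,\beta,F)=(k,k-1,k^2-1)$: it is precisely this operating point that makes the bound $\beta+(k-1)\alpha=k^2-1=F$ tight, collapsing the intersection dimension to zero and yielding the claimed mutual linear independence of the repair subspaces.
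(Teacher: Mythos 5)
Your proposal is correct, and while it leans on the same two ingredients as the paper's proof --- the alternative recovery set $\cB=(\cA\setminus\{j\})\cup\{x\}$ and the inclusion $\cW_x\subseteq\sum_{i\in\cA}\cS_i$ --- it organizes them differently and arrives at a strictly stronger conclusion. The paper argues by contradiction: assuming a nontrivial dependency produces a vector $\vec{v}_\ell\in\cS_\ell\cap\sum_{j\neq\ell}\cS_j$, which is used to peel off a hyperplane $\hat{\cS}_\ell\subset\cS_\ell$ with $\dm{\hat{\cS}_\ell}\leq k-2$ and show $\cF\subseteq\hat{\cS}_\ell+\sum_{j\neq\ell}\cW_j$, giving the infeasible bound $k^2-1\leq(k-2)+k(k-1)=k^2-2$. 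You instead prove directly, with no contradiction and no hyperplane extraction, that
\[
\cS_j\cap\sum_{i\in\cA\setminus\{j\}}\cW_i=\{\vec{0}\},
\]
via the identity $\dm{\cS_j}+\dm{\sum_{i\neq j}\cW_i}-\dm{\cF}\leq(k-1)+k(k-1)-(k^2-1)=0$. This is a genuinely stronger statement: the proposition only asserts $\cS_j\cap\sum_{i\neq j}\cS_i=\{\vec{0}\}$, and since $\cS_i\subseteq\cW_i$ your disjointness property implies it trivially. What you buy is a cleaner one-line dimension count, a reusable lemma (disjointness of a repair subspace from the span of all \emph{other helper nodes in full}), and an explicit identification of why the argument works only at the optimal operating point, namely that $\beta+(k-1)\alpha=F$ exactly. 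What the paper's route buys is, arguably, nothing extra here --- its contradiction bookkeeping is a longer path to a weaker conclusion --- so your version is the preferable one.
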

	
\begin{proof}[Proof of Proposition~\ref{repair-dep}]
We prove this by contradiction. Assume there exist vectors $\vec{v}_j\in \cS_j$ with at least one non-zero vector (say $\vec{v}_{\ell} \neq \vec{0}$) that sum up to zero. We have 
\begin{align}
\vec{v}_\ell = \sum_{j\in \cA\setminus\{\ell\}} (-\vec{v}_j) \in \sum_{j\in \cA\setminus\{\ell\}} \cS_j.
\label{eq:1}
\end{align}
Since $\vec{0}\neq \vec{v}_\ell \in \repairSpc{\ell}[x][\cA]$ and $\dm{\cS_{\ell}}=\beta$, there exist a  subspace $\hat{S}_\ell \subset \cS_\ell$ such that $\cS_\ell =  \hat{S}_\ell \oplus \sspan{\vec{v}_\ell} $ and $\dm{\hat{\cS_{\ell}}}\leq \beta-1 = k-2$. 
Now, from \eqref{eq:1} we have
\begin{align}
\cS_\ell =  \hat{S}_\ell \oplus \sspan{\vec{v}_{\ell}} \subset \hat{S}_\ell + \sum_{j\in \cA\setminus\{\ell\}} \cS_j.
\label{eq:2}
\end{align}
Next, note that  $|(\cA \setminus\{\ell\}) \cup \{x\}|=k$. Therefore, we have
\begin{align}
\cF &= \sum_{j\in (\cA \setminus \{\ell\}) \cup \{x\}} \node j = \node x + \sum_{\cA \setminus\{\ell\} } \node j \nonumber\\
&\subseteq \sum_{j\in \cA} \cS_j + \sum_{j\in \cA \setminus\{\ell\} } \node j \nonumber\\
&= \cS_\ell + \sum_{j\in \cA\setminus\{\ell\}} \cS_j+ \sum_{j\in \cA \setminus\{\ell\} } \node j \nonumber\\ 
&\stackrel{(a)}{\subseteq} \hat{\cS}_\ell + \sum_{j\in \cA\setminus\{\ell\}} \cS_j+ \sum_{\cA \setminus\{\ell\} } \node j \nonumber\\
&\stackrel{(b)}{=} \hat{\cS}_\ell + \sum_{j\in \cA \setminus\{\ell\} } \node j
\label{eq:3}
\end{align}
where $(a)$ is implied by \eqref{eq:2}, and $(b)$ holds since $\cS_j \subseteq \node j$. 
From \eqref{eq:3} we have
\begin{align*}
k^2-1 &= \dm{\cF} \leq \dm{\hat{\cS}_\ell} + \sum\nolimits_{\cA \setminus\{\ell\} } \dm{\node j} \\
&= (k-2) + (k-1)k = k^2-2,
\end{align*}
which is in infeasible. Thus the initial assumption is wrong, and repair subspaces are mutually linearly independent. 
\end{proof}

\begin{prop}
For a fixed pair $(\cA,x)$, define $\repairSpc{}=\sum_{i \in \cA} \repairSpc{i}[x][\cA]=\sum_{i \in \cA} \repairSpc{i}$. Then we have 
$
\node i \setminus \repairSpc{} \neq \varnothing
$ 
for every $ i\in \cA$.
\label{prop:W-S}
\end{prop}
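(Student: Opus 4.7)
The plan is to argue by contradiction, in the spirit of the dimension count used in Proposition~\ref{repair-dep}. I suppose that $\node{\ell}\subseteq \repairSpc{}$ for some $\ell\in\cA$ and derive an upper bound on $\dim{\cF}$ strictly below $F=k^{2}-1$. The crucial observation is that, under this hypothesis, \emph{two} of the node spaces relevant to data recovery --- namely $\node{x}$ (by the very definition of the repair of $x$ via $\cA$) and $\node{\ell}$ (by the contradiction hypothesis) --- both sit inside $\repairSpc{}$, so each of them can be absorbed into a collection of repair subspaces of dimension only $\beta=k-1$ instead of the full $\alpha=k$.

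Concretely, since $|\cA|=k\ge 2$, I can pick some $j_{0}\in \cA\setminus\{\ell\}$ and apply data recovery to the set $\cB'=(\cA\setminus\{j_{0}\})\cup\{x\}$ of size $k$. This gives
\[
\cF=\node{x}+\node{\ell}+\sum_{j\in \cA\setminus\{\ell,j_{0}\}}\node{j}.
\]
Using $\node{x},\node{\ell}\subseteq \repairSpc{}=\cS_{\ell}+\cS_{j_{0}}+\sum_{j\in \cA\setminus\{\ell,j_{0}\}}\cS_{j}$ together with $\cS_{j}\subseteq \node{j}$ for each $j\in \cA\setminus\{\ell,j_{0}\}$ yields
\[
\cF\subseteq \cS_{\ell}+\cS_{j_{0}}+\sum_{j\in \cA\setminus\{\ell,j_{0}\}}\node{j}.
\]

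The dimension count then closes the argument immediately: since $\dim{\cS_{\ell}},\dim{\cS_{j_{0}}}\le k-1$ and $\dim{\node{j}}\le k$, the right-hand side has dimension at most $2(k-1)+(k-2)k=k^{2}-2$, contradicting $\dim{\cF}=k^{2}-1$. Hence $\node{\ell}\not\subseteq \repairSpc{}$, as claimed.

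The main subtlety, rather than any genuine technical obstacle, is recognising that one must absorb \emph{two} distinct nodes into $\repairSpc{}$: the dimension saving of $2(\alpha-\beta)=2$ produced by this double substitution is exactly what is needed to exceed the one-dimensional slack $\alpha k-F=1$ in the data-recovery inequality. If one tried to absorb only $\node{x}$ (which is always possible) the bound becomes $(k-1)+(k-1)k=k^{2}-1=F$, matching but not violating the target dimension, and no contradiction arises. Choosing $j_{0}\ne \ell$ guarantees that $\node{x}$ and $\node{\ell}$ are the two distinct absorbed nodes of $\cB'$, which is available precisely because $k\ge 2$.
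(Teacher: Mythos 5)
Your proof is correct and follows essentially the same route as the paper's: assume $\node\ell\subseteq\repairSpc{}$, fix some $j_0\in\cA\setminus\{\ell\}$, apply data recovery to $(\cA\setminus\{j_0\})\cup\{x\}$, absorb both $\node x$ and $\node\ell$ into $\repairSpc{}$, collapse $\cS_i\subseteq\node i$ for $i\notin\{\ell,j_0\}$, and reach $\dim\cF\le 2(k-1)+(k-2)k=k^2-2<k^2-1$. Your closing remark accurately pinpoints why absorbing two nodes (rather than just $\node x$) is exactly what makes the count go through, which is the same mechanism the paper exploits, just stated more explicitly.
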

\begin{proof}[Proof of Prposition~\ref{prop:W-S}]
We again prove this claim by contradiction. Suppose  there exists a node $\ell\in \cA$ such that $\node \ell \subseteq \repairSpc{}$. 
Now, fix a node $j\in \cA$ with $j \neq \ell$. We have	
\begin{align}
		\node \ell \subseteq \repairSpc{}= \sum_{i\in \cA} \repairSpc{i} &=\repairSpc{\ell}+\repairSpc{j}+ \sum_{ \scriptscriptstyle{ i\in \cA\setminus \set{j,\ell}} } \hspace{-10pt} \repairSpc{i}\nonumber\\
		&\subseteq \repairSpc{\ell}+\repairSpc{j}+ \sum_{ \scriptscriptstyle{ i\in \cA\setminus \set{j,\ell}} }  \node i.
\label{eq:4}		 
\end{align}
	
Next, since $|\{x\} \cup \cA\setminus\{ j\}|=k$, we have
\begin{align*}
\cF &= \sum_{ i\in \{x\} \cup \cA\setminus\{j\} } \node i 
=\node x + \node \ell + \sum_{ i\in \cA\setminus \set{j,\ell} } \node i \\
&\stackrel{(c)}{\subseteq} \left(\sum_{ i\in \cA } \repairSpc{i}\right)  
\hspace{-2pt}+\hspace{-2pt}
\left(\repairSpc{\ell}+\repairSpc{j}+ \hspace{-5pt}\sum_{ \scriptscriptstyle{ i\in \cA\setminus \set{j,\ell}} } \hspace{-7pt} \node i\right)
+ \hspace{-7pt}
\sum_{ i\in \cA\setminus \set{j,\ell} } \hspace{-7pt}\node i \\
&= \repairSpc{\ell}+\repairSpc{j} + \hspace{-3pt}\sum_{ \scriptscriptstyle{ i\in \cA\setminus \set{j,\ell}} }  \left(\node i + \repairSpc{i}\right) 
\stackrel{(d)}{=} \repairSpc{\ell}+\repairSpc{j} +  \hspace{-3pt}\sum_{ \scriptscriptstyle{ i\in \cA\setminus \set{j,\ell}} }  \node i 
\end{align*}
where we used \eqref{eq:4} in $(c)$, and equality in $(d)$ follows the fact that $\repairSpc{i} \subseteq \node i$. Therefore,
\begin{align*}
k^2-1=\dm{\cF} & 
 \leq \dm{\repairSpc{\ell}} +\dm{\repairSpc{j}} + \hspace{-4pt} \sum_{ \scriptscriptstyle{ i\in \cA\setminus \set{j,\ell}} } \hspace{-4pt}\dm{\node i}\\
&= 2(k-1) + (k-2)k= k^2-2,
\end{align*}
which is infeasible.  This implies our initial assumption is not true, and therefore the claim of the proposition holds. 
\end{proof}

Now, we are ready to  prove Proposition~\ref{prop:main}.

\begin{proof}[Proof of Proposition~\ref{prop:main}]
Consider the repair  of  $x$ by the help of  $\cA$, where each $j\in \cA$ sends  $\repairSpc{j}=\repairSpc{j}[x][\cA]=\subseteq \node j$. Since the code operates at $(\alpha,\beta,F)=(k,k-1,k^2-1)$, we have $\dm {\repairSpc{j}[x][\cA]}=k-1$. This proves part \label{prop:main-1} of the proposition. 
	
Recall that $\repairSpc{j}=\repairSpc{j}[x][\cA] \subseteq \node j$. We have $\dm{\node j} =k$, and $\dm{\repairSpc{j}} = k-1$. Hence, $\repairSpc{j}$ can be extended by a one-dimensional subspace $\nodeCmp{j} = \nodeCmp{j}[x][\cA]\triangleq\sspan{\ncmpVec{j}[x][\cA]} = \sspan{\vec{u}_j}$ to span the entire  space $\node j$ for every $j\in \cA$. In other words, there exist $\ncmpVec{j}[x][\cA]\in \node j$ such that $\node j =  \repairSpc{j}[x][\cA] \oplus \sspan{\ncmpVec{j}[x][\cA]}$. Note that in general there are many choices for $\sspan{\ncmpVec{j}[x][\cA]}$, and we may continue the proof with \emph{any} of them.

Next, note that there are a total of $k^2$ vectors stored in $\{\node j: j\in \cA\}$, and they all lie in $\cF$ which is an $(k^2-1)$ space. Hence, they cannot be all linearly independent. More precisely, there must be a set of  vectors one from each $\cW_j$, that sum up to zero. Since each vector in $\node i$ can be written as $\xi_j \vec{u}_j+\vec{s_j}$ for some $\xi_j \in \bbF_q$, and some $\vec{s}_j\in \repairSpc{j}$, we have
\begin{align}
\exists \  \vec{s_j}\in \repairSpc{j},\xi_j \in \bbF_q, j\in \cA :\ \ 
&\sum_{j\in \cA}(\xi_j \vec{u}_{j}+\vec{s_j}) =\vec{0}.
\label{eq:5}
\end{align}

We define $\cmpVec{j}[x][\cA] \triangleq \xi_j \vec{u}_{j}+\vec{s_j} \in \cW_j$ for every $j\in \cA$. 
Hence, part \eqref{prop:main-4} of the proposition is immediately implied by \eqref{eq:5}.
In order to prove part \eqref{prop:main-2} of the proposition, it suffices to show that $\xi_j \neq 0$ for $j\in \cA$. First note that there is at least one non-zero $\xi_j$, because otherwise \eqref{eq:5} implies existence of non-zero $\vec{s}_j\in \repairSpc{j}$  for $j\in \cA$ with $\sum_{j\in \cA} \vec{s}_j =\vec{0}$, which is in contradiction with Proposition~\ref{repair-dep}. So, let $\xi_i\neq 0$. Now if $\xi_\ell=0$ for some $\ell \in \cA$, we have
\begin{align*}
\ncmpVec{i}=  \sum_{j\in \cA\setminus\set{\ell,i}} -\xi_i^{-1}(\xi_j \ncmpVec{j}+\vec{s}_j)-\xi_i^{-1}\vec{s}_i-\xi_i^{-1}\vec{s}_\ell
\end{align*}
which implies 
\begin{align}
\nodeCmp{i} \subseteq \sum_{j\in \cA\setminus\set{\ell,j}} \node j + \repairSpc i + \repairSpc \ell 
\label{eq:6}
\end{align}

Since $|\{x\} \cup (\cA\setminus\{\ell\})|=k$, we have 
\begin{align}
\cF &= \cW_x + \sum_{j\in \cA \setminus \set \ell}{\node j} = 
\node x +\node i + \sum_{i\in \cA \setminus \set{\ell,i}}{\node j}\nonumber\\
&= \node x +\left( \nodeCmp{i} +\repairSpc{i} \right) + \sum_{i\in \cA \setminus \set{\ell,i}}{\node j}\nonumber\\
&\stackrel{(e)}{\subseteq} \hspace{-2pt}\left( \sum_{j\in \cA} \repairSpc{j}\right) \hspace{-2pt}+\hspace{-2pt} 
\left( \sum_{j\in \cA\setminus\set{\ell,j}} \hspace{-2pt}\node j + \repairSpc i + \repairSpc \ell  \right)\hspace{-2pt} +\hspace{-2pt}
\sum_{j\in \cA \setminus \set{\ell,i} }\hspace{-2pt}{\node j}\nonumber\\
&= \repairSpc i + \repairSpc \ell  + \sum_{j\in \cA \setminus \set{\ell,i} }{\node j},
\label{eq:7}
\end{align}
where $(e)$ is implied by \eqref{eq:6}. Therefore, from \eqref{eq:7} we have
\begin{align*}
k^2-1&=\dim{\cF} 
\leq \dm{\repairSpc i  + \repairSpc \ell+\hspace{-7pt} \sum_{j\in \cA \setminus \set{\ell,i} }\hspace{-2pt}{\node j}}\hspace{-2pt} \leq k^2-2.
\end{align*}
This implies 
\begin{align}
\xi_j\neq 0, \qquad \forall j\in \cA,
\label{eq:8}
\end{align} 	
which  completes the proof of part (ii) of the proposition.  	

Moreover, from \eqref{eq:8}, it is clear that $\vec{0} \neq \cmpVec{j}[x][\cA]\notin \cS_j$. Then, since  $\dm{\node j} = \dm{\repairSpc{j}[x][\cA]}+1$, one can conclude that $\node j = \repairSpc{j}[x][\cA] + \sspan{\cmpVec{j}[x][\cA]}$, which proves part \eqref{prop:main-3} of the proposition. 

In order to show part \eqref{prop:main-5}, we note that $\cmpSpc{}[x][\cA]=\sum_{j\in \cA} \cmpSpc{j}[x][\cA] = \sspan{\cmpVec{j}[x][\cA]; j\in A}$ is spanned by $k$ vectors. Moreover, part \eqref{prop:main-4} of the proposition implies that these $k$ vectors are not linearly independent, and hence $\dm{\cmpVec{}[x][\cA]} \leq k-1$.  Additionally, from  part  \eqref{prop:main-3} we have
\begin{align*}
\cF &= \sum_{j\in \cA} \node j = \sum_{j\in \cA} \left(\repairSpc{j}[x][\cA] + \cmpSpc{j}[x][\cA]\right)\\
&= \sum_{j\in \cA} \repairSpc{j}[x][\cA] + \sum_{j\in \cA} \cmpSpc{j}[x][\cA]
= \sum_{j\in \cA} \repairSpc{j}[x][\cA] +  \cmpSpc{}[x][\cA]
\end{align*}
Hence, 
$
\dm{\cF} \leq \sum_{j\in \cA} \dm{\repairSpc{j}[x][\cA]} + \dm{ \cmpSpc{}[x][\cA]}
$, 
and so $\dm{ \cmpSpc{}[x][\cA]} \geq (k^2-1) - k(k-1)=k-1$. This together with $\dm{\cmpSpc{}[x][\cA]} \leq k-1$ implies part \eqref{prop:main-5} of the proposition. 

Consider $\cB=\cA\setminus \{\ell\}$ for some $\ell\in \cA$.  
Note  from part \eqref{prop:main-4} of the proposition that $\cmpVec{\ell}[x][\cA] = -\sum_{j\in \cB} \cmpVec{j}[x][\cA]$, which implies $\cmpSpc{\ell}[x][\cA] \subseteq \sum_{j\in \cB} \cmpSpc{j}[x][\cA]$. Therefore, 
\begin{align*}
\cmpSpc{}[x][\cA] &= \sum_{j\in \cA} \cmpSpc{j}[x][\cA] = \cmpSpc{\ell}[x][\cA] + \sum_{j\in \cB} \cmpSpc{j}[x][\cA]  
= \sum_{j\in \cB} \cmpSpc{j}[x][\cA], 
\end{align*}
which means $\cmpSpc{}[x][\cA]$ is spanned by any $(k-1)$ subspaces of form $\cmpSpc{j}[x][\cA]$. This completes the proof of \eqref{prop:main-6}.
\end{proof}

\bibliography{Pr-ref}

\end{document}